\newtheorem{definition}{Definition}
\newtheorem{proposition}{Proposition}
\newtheorem{proof}{Proof}
 \def\newblock{\ }%
\begin{document}

\author{Sophie N. Parragh \\ Fabien Tricoire\\  
  {\small \it  Institute of Production and Logistics Management}\\
  {\small \it Johannes Kepler University, Linz, Austria} \\
  {\small \it \{sophie.parragh,fabien.tricoire\}@univie.ac.at}
}

\title{Branch-and-bound for bi-objective integer programming
  \footnote{Accepted for publication in INFORMS Journal On Computing; doi: 10.1287/ijoc.2018.0856}
}
  
\maketitle

\begin{abstract}%
In
bi-objective integer optimization the optimal result corresponds to a
set of non-dominated solutions. We propose a generic bi-objective
branch-and-bound algorithm that uses a
problem-independent branching rule exploiting available integer solutions
and takes 
advantage of integer objective coefficients.
The
developed algorithm is applied to bi-objective facility location problems, to
the bi-objective set covering problem, as well as to the bi-objective team
orienteering problem with time windows. In the latter case, lower bound sets are computed by means of column generation.
Comparison to state-of-the-art exact algorithms
shows the effectiveness of the proposed branch-and-bound algorithm.
\end{abstract}%

\section{Introduction}\label{sec:introduction}
We live in a world full of trade-offs. In almost every problem situation we
encounter, it is difficult to define the one and only goal to aim for,
especially whenever more than one decision maker or stakeholder is
involved. Thus, many if not all practical problems involve several and often
conflicting objectives. Prominent examples are profitability or cost versus
environmental concerns~\citep{ramos2014planning} or customer satisfaction
\citep{braekers2015bi}.
Whenever it is not possible to aggregate these conflicting objectives in a meaningful way, it is recommendable to produce the 
set of trade-off solutions in order to elucidate their trade-off relationship.

In this paper, we propose a branch-and-bound framework which produces the optimal set of trade-off solutions for optimization problems with two objectives which can be modeled as integer linear programs.
The main contributions are 
a new problem-independent branching rule for
bi-objective optimization, building up on the Pareto branching idea of \citet{Stidsen:2014}, and several improvements exploiting the integrality of
objective function values.
We compare our framework to existing general purpose state-of-the-art frameworks, such as the $\epsilon$-constraint framework and the balanced box method, and to results obtained by means of another branch-and-bound algorithm developed by \citet{Belotti:2013}.
Furthermore, we apply it to a bi-objective orienteering problem for which lower bound sets are produced
using column generation.
Although column generation has been used in the context of bi-objective optimization, e.g. in the context of robust airline crew scheduling~\citep{tam11}, vehicle routing~\citep{sarpong13}, and bi-objective linear programming~\citep{raith12}, 
this is, to the best of our knowledge, the first time column generation is used in a bi-objective branch-and-bound algorithm, resulting in a bi-objective branch-and-price
algorithm.

The paper is organized as follows. In Sections~\ref{sec:preliminaries} and~\ref{sec:relatedWork}, we give preliminaries and we review related work. In Section~\ref{sec:bobab}, we present the key ingredients to our branch-and-bound framework and in Section~\ref{sec:integralityImprovements} we discuss the proposed enhancements. Experimental results and how we embed column generation into our branch-and-bound framework are presented in Section~\ref{sec:exp}. Section~\ref{sec:conclusion} concludes the paper.

\section{Preliminaries}
\label{sec:preliminaries}
The aim of this paper is to solve bi-objective integer linear programs of the following form:\\
\vspace{-5ex}
\begin{align}
	\min_x \quad  &f(x) = (f_1(x), f_2(x) ) \nonumber \\
	\text{s.t.} \quad 	& Ax = b  \label{IP}\\
					& x_i \quad \text{integer} && \forall i = 1, .\ldots , n \nonumber
\end{align}
where $f(x)$ is the vector of objective functions, $A$ is the constraint matrix of size $m \times n$, $b$ the right hand side vector of size $m$, and $x_i$ the decision variables which are restricted to integer values. An important subclass are binary (combinatorial) problems. 

In formulation~\eqref{IP} we strive for identifying the set of Pareto optimal
solutions according to the given objectives. A feasible solution is Pareto
optimal if there does not exist any feasible solution which dominates it, i.e. which is better in one objective and not worse in any other objective. The image of a Pareto optimal solution in criterion space is referred to as non-dominated point. Each non-dominated point may represent more than one Pareto optimal solution. We aim at generating all non-dominated points, i.e. at least one Pareto optimal or efficient solution per non-dominated point.

Depending on the structure of the objective functions and the feasible domains
of the variables, the set of all non-dominated points
may take different shapes. In the case
where all variables assume continuous values, it
corresponds to the boundary of the convex hull of the image of all feasible
solutions in criterion space.  It can be characterized by the extreme points of
the boundary. These extreme points can be computed efficiently, e.g. by the
\emph{weighted sum method} of \citet{Aneja:1979}. It combines the two
objectives into a single one, resulting in $(w_1 f_1(x) + w_2 f_2(x))$, with $w_1$ and $w_2$ giving the weights of objective one and two, respectively.
These weights are
changed systematically so as to obtain the entire set of extreme points.  In the
case where all variables may only take integer values, which we look at, the
set of non-dominated points is composed of points which may or may not lie on the
boundary of the convex hull in criterion space. Solutions whose images in
criterion space correspond to extreme points are called \emph{extreme
  (supported) efficient solutions}, those that correspond to points that lie on
the boundary are called \emph{supported efficient solutions} and those that
correspond to points that lie in the interior of the convex hull but are still
non-dominated are called \emph{non-supported efficient solutions}.
In the integer case, the set of extreme points can also be computed by means of the \emph{weighted sum method}.

Additional important notions in multi-objective optimization are \emph{ideal point} and \emph{nadir point}.
The ideal point is constructed by combining the best possible values for each objective function. It dominates all non-dominated points. The nadir point is its opposite, a point that is dominated by all non-dominated points. It is constructed by combining the worst possible values for each objective function of all efficient solutions.
Both notions have been used in the context of branch-and-bound algorithms for multi-objective optimization.
For a more detailed introduction to multicriteria decision making, we refer to \cite{ehrgott2005multicriteria}.

\section{Related work}
\label{sec:relatedWork}

Exact approaches in multi-objective (mixed)
integer programming can be divided into two classes: those that work in the
space of objective function values (referred to as criterion space search
methods, e.g. by \citet{Boland:2015}) and those that work in the space of 
decision variables
(generalizations of branch-and-bound algorithms). 

Criterion space search methods solve a succession of single-objective problems in order to compute the set of Pareto optimal solutions. Therefore, 
they exploit the power of state-of-the-art mixed integer programming solvers.
This appears to be one of their main advantages in comparison to generalizations of branch-and-bound algorithms \citep{Boland:2013a}. However, many combinatorial single-objective problems cannot be solved efficiently by commercial solvers, e.g. the most efficient exact algorithms in the field of vehicle routing rely on column generation based techniques \citep[see e.g.][]{Baldacci:2012survey}. Thus, especially in this context but also in general, it is not clear whether a criterion space search method or a bi-objective branch-and-bound algorithm is more efficient.

One of the most popular criterion space search methods is the \emph{$\epsilon$-constraint method}, first introduced by~\cite{Haimes:1971}. It
consists in iteratively solving single-objective versions of a bi-objective
problem. In every step, the first objective is optimized, but a constraint is
updated in order to improve the quality of the solution with regards to
the second objective. Thus all non-dominated points are enumerated.
The \emph{$\epsilon$-constraint method} is generic and simple to implement and it is among the best performing criterion space search algorithms when applied, e.g. to the bi-objective prize-collecting Steiner tree problem~\citep{leitner14}.

Recently, the \emph{balanced box method} for bi-objective 
integer programs has been introduced by \citet{Boland:2015}. Optimal solutions
are computed for each objective and they
define a rectangle. This rectangle is then split in half and in each half, the
objective which has not been optimized yet in this half of the rectangle is
then optimized. At this stage there are two non-overlapping rectangles. The same procedure is repeated recursively on
these rectangles.

Generalizations of branch-and-bound to
multiple objectives for mixed 0-1 integer programs are provided by \cite{Mavrotas:1998}. Their bounding procedure
considers an ideal point for the upper bound (they work on a maximization
problem), consisting of the best possible 
value for each objective at this node, and keeps branching until this ideal
point is dominated by the lower bound set. \cite{Vincent:2013} improve the
algorithm by~\cite{Mavrotas:1998}, most notably by comparing bound sets instead
of ideal points in the bounding procedure.
\citet{masin2008} propose a branch-and-bound method that uses a surrogate objective function returning a single numerical value. This value can be treated like a lower bound in the context of single objective minimization and conveys information on whether the node can be fathomed or not. They illustrate their method using a three-objective scheduling problem example but no computational study is provided.
\cite{Sourd:2008} use separating hyperplanes between upper and lower bound sets
in order to discard nodes in a general branch-and-bound framework for integer programs. The concept is in fact similar to the lower bound sets defined by~\cite{Ehrgott:2006}.
\cite{Jozefowiez:2012} introduce a branch-and-cut algorithm for integer
programs in which discrete sets are used for lower bounds, so nodes can be
pruned if every point in the lower bound set is dominated by a point in the
upper bound set. Partial pruning is used in order to discard parts of the
solutions represented by a given node of the search tree, thus speeding up the
whole tree search.
\cite{Belotti:2013} are the first to introduce a general-purpose
branch-and-bound algorithm for bi-objective mixed integer linear
programming, where the continuous variables may appear in both objective
functions.
They build up on the
previous work by~\cite{Visee:1998}, \cite{Mavrotas:1998}, \cite{Sourd:2008} and
\cite{Vincent:2013} and like them, they use a binary branching scheme. Improved fathoming rules are introduced in order to
discard more nodes during tree search.  \cite{adelgren14} propose an efficient data structure to store and update upper bound sets in the context of mixed integer programming and illustrate its efficiency using the algorithms of \cite{Belotti:2013} and of \cite{adelgren2016BBMOMIP}.

Ideas most related to ours are presented by
\citet{Stidsen:2014}. They provide a branch-and-bound algorithm to deal with
a certain class of bi-objective mixed integer linear programs. They propose two binary branching schemes that exploit
partial dominance of already visited integer solutions and the current upper
bound set, respectively. Our approach relies on bound sets to represent the
lower bound, in contrast to \citet{Stidsen:2014} who solve one scalarized
single objective problem in each node of their branch-and-bound tree. In
addition, in our scheme, the number of child nodes depends on the number of
currently non-dominated portions which are obtained after applying what we call
\emph{filtering} of the lower bound set by the current upper bound set. Ideas
similar to the Pareto branching concept of \citet{Stidsen:2014} and its
extension to bound sets presented in this paper, which we name objective space
branching, are also being explored by \citet{adelgren2016BBMOMIP} in the
context of a branch-and-bound method for multi-objective mixed integer
programming and  by \citet{Gadegaard2016:workingPaper} in the context of a bi-objective branch-and-cut algorithm.

Furthermore, we also propose several improvements exploiting the integrality of objective
functions. These improvements, which we call \emph{segment tightening}, \emph{lower bound lifting}, and  \emph{integer dominance} allow to disregard even larger regions of the
objective space. We illustrate their impact on the overall performance of the algorithm in the experimental results section.

\section{A branch-and-bound framework for bi-objective integer
  optimization}\label{sec:bobab}

Branch-and-bound is a general purpose tree search method to solve (mixed)
integer linear programs. As its name suggests, it has two main ingredients:
branching and bounding. Branching refers to the way the search space is
partitioned, while bounding refers to the process of using valid bounds to
discard subsets of the search space without compromising optimality. In the
following, we first describe the general framework of our bi-objective
branch-and-bound (BIOBAB). We then describe its different ingredients in
further detail.

\subsection{General idea}
\label{sec:genidea}
Our BIOBAB is similar to a
traditional single-objective branch-and-bound algorithm, except for the fact
that we compare bound sets, instead of single numerical values.

In the beginning the upper bound (UB) set is empty. It is  
updated every time an integer solution $x$ is obtained. One of
three things can happen:
\begin{enumerate}
\item if $x$ is dominated by at least one solution
from the UB set, then $x$ is discarded,
\item if some solutions from the UB set are dominated by $x$, then these
  solutions are discarded and $x$ is added to the UB set,
\item if $x$ neither dominates nor is dominated by any solution from the UB set,
$x$ is directly added to the UB set.
\end{enumerate}
These conditions can be tested in linear time. If the UB set is sorted
according to the values of one objective, it is possible to check dominance in
logarithmic time using binary search.

In each branch-and-bound node, lower bound (LB) sets are computed by
means of an algorithm similar to the \emph{weighted sum method} of
\cite{Aneja:1979}, described in detail in Section~\ref{sec:LBlifting}, with either the linear relaxation or the original integer program
(IP), since both yield valid bound sets  \citep{Ehrgott:2006}.

A branch-and-bound node contains information on branching decisions as usual, but
also a bound for each objective. This is used to partition the objective space. The
objective bounds at the root node can be known valid bounds, or infinite
values.

Once the LB set at a given node has been calculated, we apply what we call
\emph{filtering} (see Section~\ref{sec:filter}): the current UB set and the current
LB set are compared
and a set of non-dominated portions of the current LB set are returned. In the
case where this set is empty, the node can be fathomed. Otherwise branching may
be required; the filtered LB set is then used as input for the branching
procedure (see Section~\ref{sec:branching}). If the LB set
corresponds to a leaf (i.e. the node corresponds to a unique integer feasible
solution), or if it is completely dominated following filtering, then the node
can be fathomed and branching is not necessary. Otherwise a number of child
nodes are generated, which are added to the list of branch-and-bound nodes to
be processed. Once all nodes have been explored, the search is over and the
complete Pareto set for the original problem is contained in the UB set.

We now present the concept of \emph{lower bound segment} which is used in the bounding, filtering
and branching procedures and which is also key to the proposed improvements discussed in 
Section~\ref{sec:integralityImprovements}. 

\subsection{Lower bound segments}
 In the following we assume 
that a LB set has been produced.
As mentioned above, it is described by
the 
extreme points of the convex hull (of the image of
the feasible set in objective space) for the linear
relaxation of the problem (or the original IP).

To any two consecutive extreme points $p=(p_1,p_2)$ and $q=(q_1,q_2)$ we can 
associate the subset of objective space that is \emph{covered} or dominated and
that represents the possibility to find feasible solutions 
to the
integer problem.
To define this covered space, we associate to each pair of extreme points a
top-right corner (or local nadir) point $c= (c_1,c_2)$ 
of which coordinates are valid
single-objective upper bounds, one per objective, on the space dominated by the
two extreme points and the line connecting them.

Let $\Xi$ be the set of all points in the objective space which are
associated to feasible solutions to the bi-objective integer problem at hand.
We now formally introduce the concept of \emph{lower bound segment} that we use
to represent LB sets.

\begin{definition}
 Three points $p=(p_1,p_2), q=(q_1,q_2)$, and $c=(c_1,c_2)$, such that 
  $p_1 < q_1 \land p_2 > q_2$, 
  define a \emph{lower bound segment}  $s$ iff
  $\{(z_1, z_2) \in \Xi | z_2 < a z_1 + b\} = \emptyset$,
  where $a = (q_2 - p_2) / (q_1 - p_1)$ is the slope of the line
  defined by the two points and $b = p_2 - a \cdot p_1$ is the y-intercept of
  this same line; $c$ is a valid local nadir point, such that $c_1 \geq
  q_1$ and $c_2 \geq p_2$.
\end{definition}
Thus, any point $z = (z_1, z_2) \in \Xi$ is on or above the line defined by
$p$ and $q$.
Note that we do not include the case where $p = q$ into our
definition. For convenience, we extend the definition of segments to allow
equal points:

\begin{definition}
  Two points $p = (p_1, p_2)$ and $c = (c_1, c_2)$
  define a \emph{lower bound segment}  $s$ iff
  there does not exist a feasible point $z$ that dominates $p$; 
  $c$ is a valid local nadir point, such that $c_1 \geq p_1$ and $c_2
  \geq p_2$
  \label{def:onepointsegment}
\end{definition}

Figure~\ref{fig:segment-dominated-space} depicts such a lower bound segment
(shaded).
\begin{figure}
  \begin{center}
      \begin{tikzpicture}
        \node[anchor=south west,inner sep=0] (image) at (0,0) {\includegraphics[scale=.35]{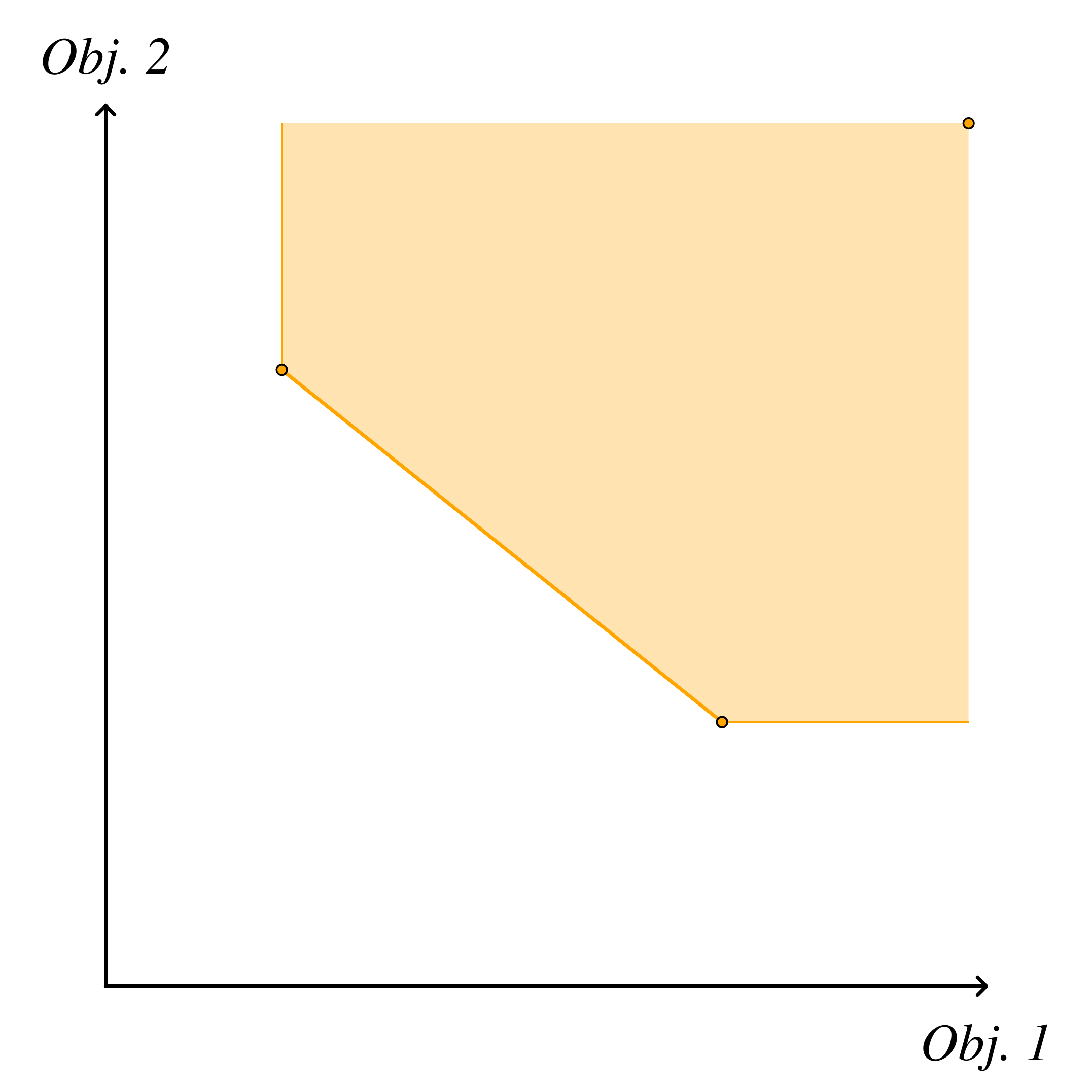}};
        \begin{scope}[x={(image.south east)},y={(image.north west)}]
          \large
           \node[anchor=north east] at (0.28,0.67) {$p$};
          \node[anchor=north east] at (0.67,0.35) {$q$};
          \node[anchor=south west] at (0.88,0.88) {$c$};
        \end{scope}
      \end{tikzpicture}
    \caption{Lower bound segment defined by the two extreme points $p$
     and $q$, and local nadir point $c$.}
    \label{fig:segment-dominated-space}
  \end{center}
\end{figure}
In the following we use object-oriented notation to refer to the
attributes defining a given segment: if extreme points $u$, $v$ and local nadir
point $w$ define a segment $s$, then, $s.p = u$, $s.q = v$ and $s.c =
w$. Additionally, $s.a$ is the slope of the line going through $s.p$ and
$s.q$, while $s.b$ is its $y$-intercept.
Graphically, a segment $s$ is a convex polygon defined by the points $s.p$,
$s.q$, $(s.c_1,s.q_2)$, $s.c$, $(s.p_1,s.c_2)$.

In the most basic case the coordinates of the  local nadir point can be arbitrarily large values.
If the two extreme points of the set of non-dominated points
are known, we can deduce a
tighter valid local nadir point from them. Similarly, we can associate to any
LB set a valid local nadir point by considering, for each objective, the
maximum value among all local nadir points.

Here we note that we can partition the objective
space, using values for any of the two objectives,
or any linear combination of both with positive weights~\citep[as is done
in][]{Stidsen:2014}.
In our BIOBAB, we partition the objective space using
different intervals for the first
objective to split the LB set into vertical stripes.

In practice, this allows us to improve the initial local nadir point for
segments of a connected sequence of segments: for any two segments connected by
point $u = (u_1, u_2)$, we can cut from the space covered by the left
segment all points $z = (z_1, z_2)$ such that $z_1 \geq u_1$, since these
points are also covered by the right segment.
By doing so, we partition the objective space
covered by the LB set into different LB segments with
associated local nadir points.

UB sets can also be used to provide better local nadir
points. For instance, following our 
previous example from Figure~\ref{fig:segment-dominated-space}, consider point
$u$, being the image of a feasible solution in objective space, such that
$u_1 \geq q_1$ and $u_2 < q_2$,
then $u_1$ defines a valid bound on the first objective
of the nadir point, 
since any point $v$
such that $v_1 \geq u_1$ and $v_2 \geq q_2$ is dominated by $u$. This
example
is described graphically in Figure~\ref{fig:tighten-boundright}: for the depicted segment, using UB point $u$, the initial local nadir point $c$ can be improved to $c’$.
\begin{figure}
  \begin{center}
      \begin{tikzpicture}
        \node[anchor=south west,inner sep=0] (image) at (0,0) {\includegraphics[scale=.35]{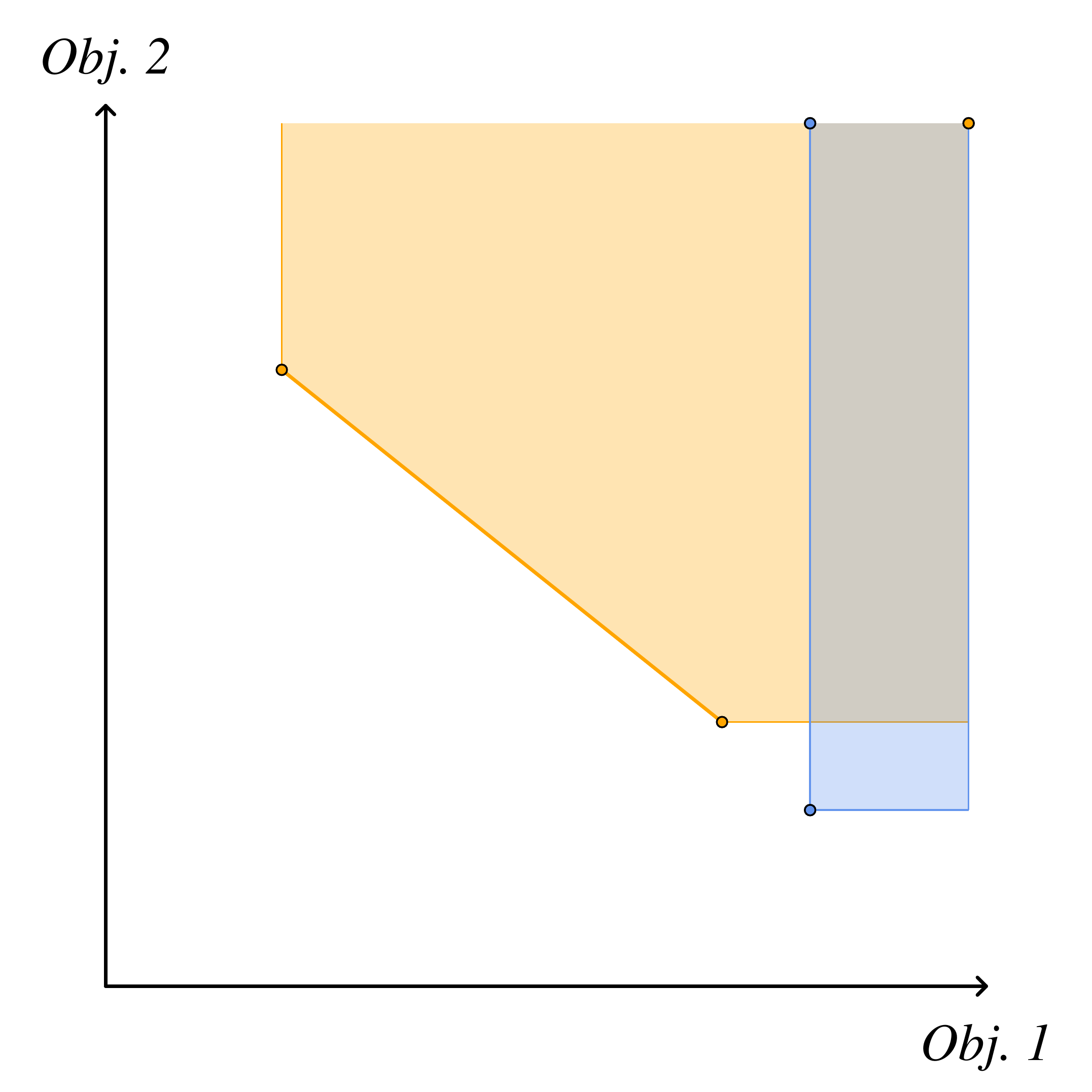}};
        \begin{scope}[x={(image.south east)},y={(image.north west)}]
          \large
           \node[anchor=north east] at (0.28,0.67) {$p$};
          \node[anchor=north east] at (0.67,0.35) {$q$};
          \node[anchor=south west] at (0.88,0.88) {$c$};
          \node[anchor=south west] at (0.72,0.88) {$c'$};
          \node[anchor=north east] at (0.76,0.26) {$u$};
        \end{scope}
      \end{tikzpicture}
    \caption{Space dominated by the segment between points $p$ and
      $q$ can be reduced by removing from it the space dominated by
     upper bound point $u$. Therefore local nadir point $c$ can be improved to
    $c'$}
    \label{fig:tighten-boundright}
  \end{center}
\end{figure}

The fact that LB sets can be reduced using UB sets is the basis for a
bi-objective branching rule that is presented in
Section~\ref{sec:branching}.

\subsection{Bound set filtering and node fathoming \label{sec:filter}}

One major difficulty in previous bi-objective branch-and-bound
approaches lies in the evaluation of the dominance of a given LB set
by a given UB set. Multiple fathoming rules have been developed over
the years \citep[see][for the current state of the art]{belotti2016fathoming}. We now
introduce the fathoming rule used in BIOBAB.

As seen above, it is possible to split a LB set into LB segments (with
associated local nadir
points). If we establish that each segment of a
current LB set is dominated by the UB set, then we also establish that the
whole LB set is dominated, therefore the current node of the
branch-and-bound tree can be fathomed. In order to determine if a LB segment is
dominated by an UB set, we simply subtract the space dominated by the UB set
from the LB segment:
if the remaining space is empty then
the LB segment is dominated, otherwise this segment may be tightened (as shown
for example in
Figure~\ref{fig:tighten-boundright}).

Algorithm~\ref{alg:filter-segment}
details how to subtract the space covered by an UB point $u = (u_1, u_2)$ from a LB segment
$s$. It shares some similarities with the dominance rules used
by~\cite{Vincent:2013}, but considers more cases. Notably, it updates the local
nadir point even in cases where $u$ does not dominate any point on the line connecting 
$s.p$ and $s.q$.
We use the object-oriented notation defined above.
A pair $(z_1, z_2)$ represents a point and $Segment(p, q, c)$ is used to create
a segment out of points $p, q$ and $c$.
Lines 2-7 deal with the case where $u$ does not dominate any point on the line connecting 
$s.p$ and $s.q$.
In that case, $u$ may still be above $s.p$ or to the right of
$s.q$, in which cases there is potential for improving $s.c$. If $u$ is to
the right of $s.q$ and below $s.q$ (Lines 5-7), we
are in the situation depicted in Figure~\ref{fig:tighten-boundright} and
$s.c_1$ can be reduced to $min(s.c_1, u_1)$. Similarly, if $u$ is above
$s.p$ and to the left of $s.p$ (Lines 3-4), $s.c_2$ can then be reduced to
$min(s.c_2, u_2)$.
Lines 8-15 deal with all the cases when at least one part of the line
connecting $s.p$ and $s.q$ is dominated
by $u$. There are four different cases: (i) $u$ does not dominate $s.p$
(Lines 9-11), (ii) $u$ does not dominate $s.q$ (Lines 12-14),
(iii) $u$ dominates neither $s.p$ nor $s.q$, resulting in two new
segments (both clauses in Lines 9-11 and 12-14 are matched), and (iv) $u$
dominates both $s.p$ and $s.q$. In case (iv) no clause is matched in the
algorithm, therefore an empty set is returned.

\begin{algorithm}
  \caption{$filterSegment(s, u)$: filter LB segment $s$ using UB point $u = (u_1, u_2)$}
  \begin{algorithmic}[1]
    \STATE $S \gets \emptyset$
    \IF{$u_2 \geq s.a \cdot u_1 + s.b \lor u_2 \geq s.p_2 \lor u_1 \geq
      s.q_1$}
    \IF{$u_1 \leq s.p_1$}
    \STATE $S \gets S \cup \{Segment(p, q, (c_1, min(c_2, u_2)))\}$
    \ELSIF{$u_2 \leq s.q_2$}
    \STATE $S \gets S \cup \{Segment(p, q, (min(c_1, u_1), c_2))\}$
    \ENDIF
    \ELSE
    \IF{$u_1 > s.p_1$}
    \STATE $S \gets S \cup \{Segment(p, (u_1, s.a \cdot  u_1 + s.b),
    (u_1, c_2))\}$
    \ENDIF
    \IF{$u_2 > q_2$}
    \STATE $S \gets S \cup \{Segment(( (u_2 - s.b) / s.a, u_2), q,
    (c_1, u_2))\}$
    \ENDIF
    \ENDIF
    \RETURN $S$
  \end{algorithmic}
  \label{alg:filter-segment}
\end{algorithm}

Now in order to filter a whole LB set using an UB set, we filter each segment
in the LB set with each point in the UB set.
If the remaining LB set is empty then the node can be fathomed, otherwise
branching is required.

\subsection{Branching procedure}
\label{sec:branching}
When dealing with bi-objective bound sets, a major challenge
is that the LB
set at a given node of the branch-and-bound tree may be partially dominated but
not completely, therefore the node cannot be fathomed. 
This is illustrated in
Figure~\ref{fig:splitLB}.

\begin{figure}
  \begin{center}
    \subfigure{
      \includegraphics[scale=.35]{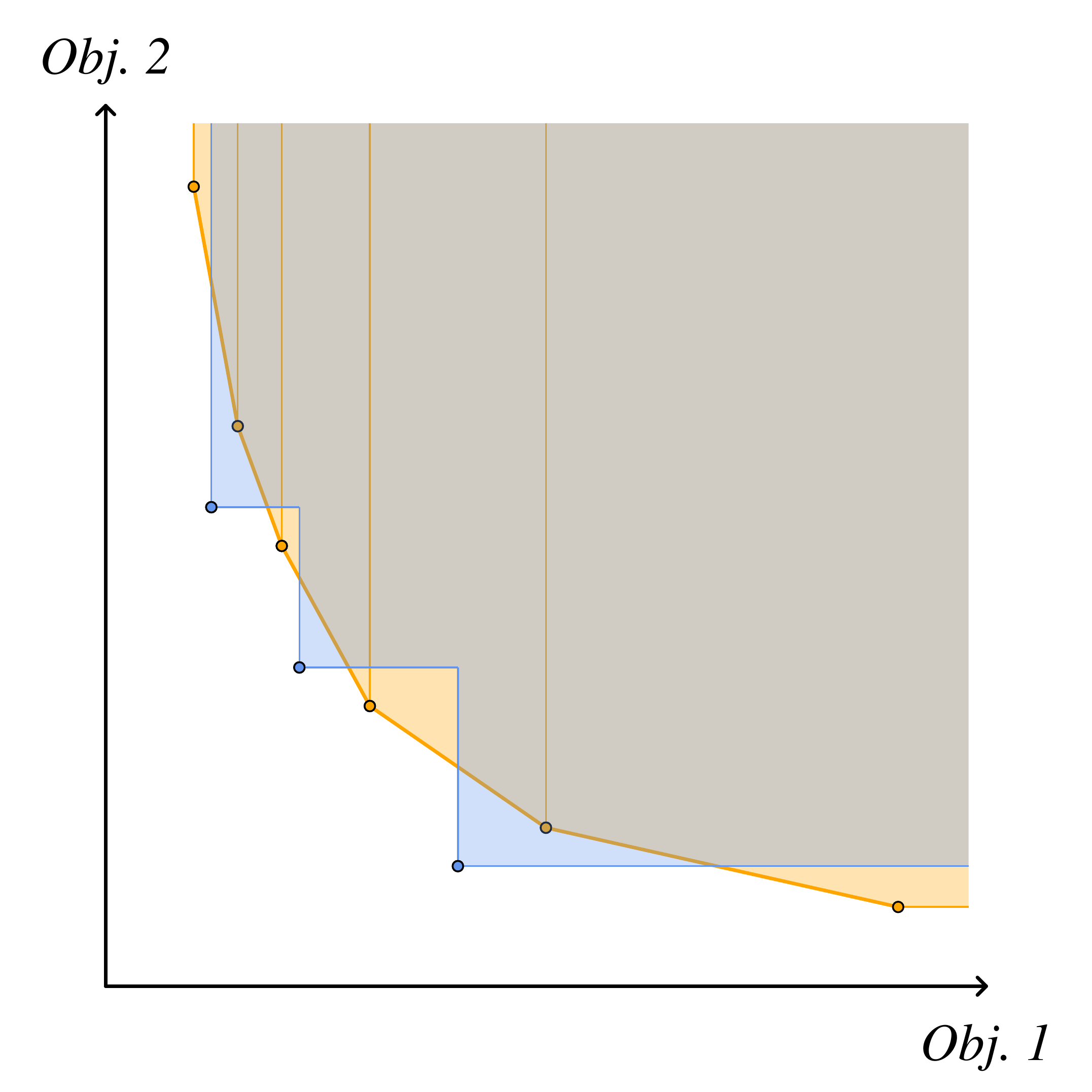}
    }
    \subfigure{
      \includegraphics[scale=.35]{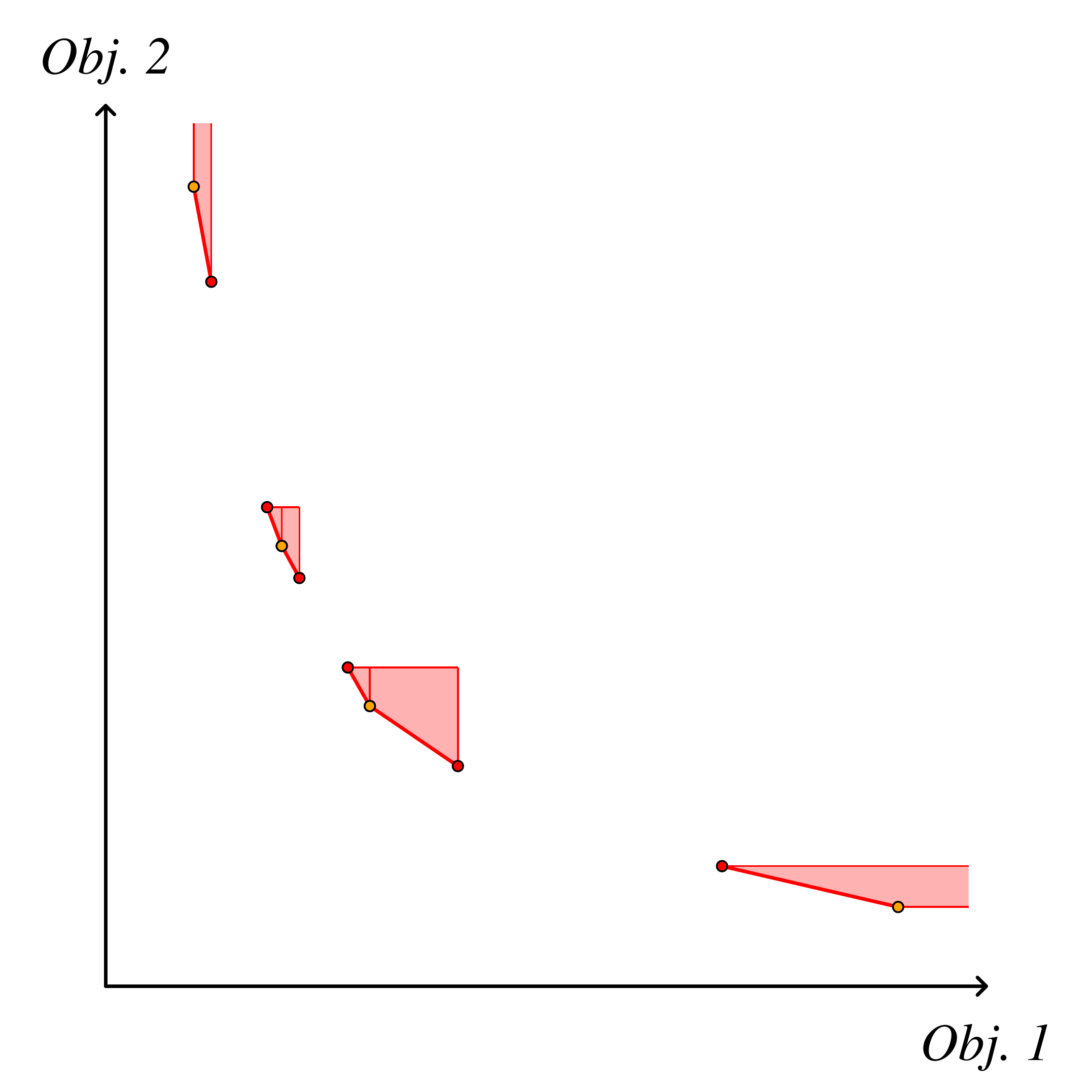}
    }
  \end{center}
  \caption{LB set is partially dominated by UB set, the node cannot be
    fathomed.}
  \label{fig:splitLB}
\end{figure}

In order to disregard those portions of the objective space which are dominated
by the UB set, we use the following branching rule which we name \emph{objective space branching}:
each contiguous
subset of the filtered LB set gives rise to a new branch and the
objective space between those parts is simply ignored. To generate the branch
for a given subset, we consider its local nadir point $c = (c_1, c_2)$ and add
two cutting 
planes
so that both objectives are better than the value at this local nadir point:
$f_1(x) \leq c_1$ and $f_2(x) \leq c_2$. In practice, this is achieved by
setting the objective bounds of the corresponding child nodes using
$c$.

Branching on objective space has no impact on decision
space, i.e. computing lower bounds for these child nodes will generate the same
LB set, albeit in several pieces, one per branch in objective space. Therefore,
objective space branching (OSB) never happens alone; branching on decision space is systematically also
performed at the same time. The end result is that each node in the search tree
contains at least one new branching decision on decision space, and optionally
a branching decision on objective space.

In
general, decision-space branching is problem specific. However, we can still
make the general remark that there is no reason 
to systematically use the same variable for decision-space branching in every
OSB child node: each OSB child has its own separate LB set, and
there is no reason to branch on the same decision variables when considering
different disjoint LB sets. For this reason, OSB is always applied first in
order to reduce the search space; for each OSB child and associated LB set,
decision-space branching is conducted independently.
The end result might still
be that each OSB child branches on the same variable, depending on the branching
rule.

Algorithm~\ref{alg:branch} gives the $branch$ function of our branch-and-bound
framework. It takes a node of the branch-and-bound tree and an LB set as input
and returns a set of nodes. A node is 
represented as a pair $(c, \Delta)$ where $c$ represents the objective
bounds (local nadir point) for this node and $\Delta$ is the set of branching
decisions for this node.

\begin{algorithm}[t]
  \caption{$branch(c, \Delta, LB)$}
  \begin{algorithmic}[1]
     \STATE $N \gets \emptyset$
    \FORALL {$S \in disjointPortions(LB)$}
    \STATE $c' = (\max\limits_{s \in S}(s.c_1), \max\limits_{s \in
      S}(s.c_2))$
    \STATE $D \gets decisionSpaceBranches(S)$ \label{line:psb}
    \FORALL {$ d \in D$}
    \STATE $N \gets N \cup (c', \Delta \cup \{d\})$
    \ENDFOR
    \ENDFOR
    \RETURN $N$
  \end{algorithmic}
  \label{alg:branch}
\end{algorithm}
In the example of Figure~\ref{fig:splitLB}, function $branch$ considers four
disjoint regions, each of them generating their own set of branching decisions.

We note here that even if an IP is used to compute the LB set, any
given variable may take different values over a same LB set; therefore
branching on decision space is still necessary.

\section{Improvements based on the integrality of objective
  functions}
\label{sec:integralityImprovements}
When solving integer programs, in many cases, objective values of feasible
integer solutions only take integer values.
For pure integer programs,
it is in practice almost always possible to use integer coefficients. The
reasons 
include the fact that LP solvers have precision limitations, and that
time-efficient floating-point numbers representations also have precision
limitations. So rounding floating-point numbers is almost always inevitable,
and if numbers are rounded to $d$ decimals then they may as well be multiplied
by $10^d$ and considered integers. We note that this property is
exploited by other methods as well. For instance, the $\epsilon$-constraint
framework uses a known $\epsilon$ value which in our case would be~1. The
balanced box method also relies on similar $\epsilon$ values.
In some cases, it is even possible to use values higher than 1, as long as
these values are valid: for instance if every coefficient for a given objective
function is integer, then the greatest common divisor of these coefficients can
be used as a valid value.
For the sake of simplicity and without loss of generality, we consider in the
following that this valid value is 1. We now explore possibilities to enhance
our bi-objective branch-and-bound by exploiting the fact that objective values
of feasible solutions are always integer.

Any given LB segment covers a part of the objective space, but also includes
subsets not containing any point with integer coordinates. Such subsets
can be disregarded during the search. 
 This is
illustrated in
Figure~\ref{fig:tighten-segment}, where dashed lines represent integer values
for each objective.
\begin{figure}
  \begin{center}
    \subfigure{
      \begin{tikzpicture}
        \node[anchor=south west,inner sep=0] (image) at (0,0) {\includegraphics[scale=.35]{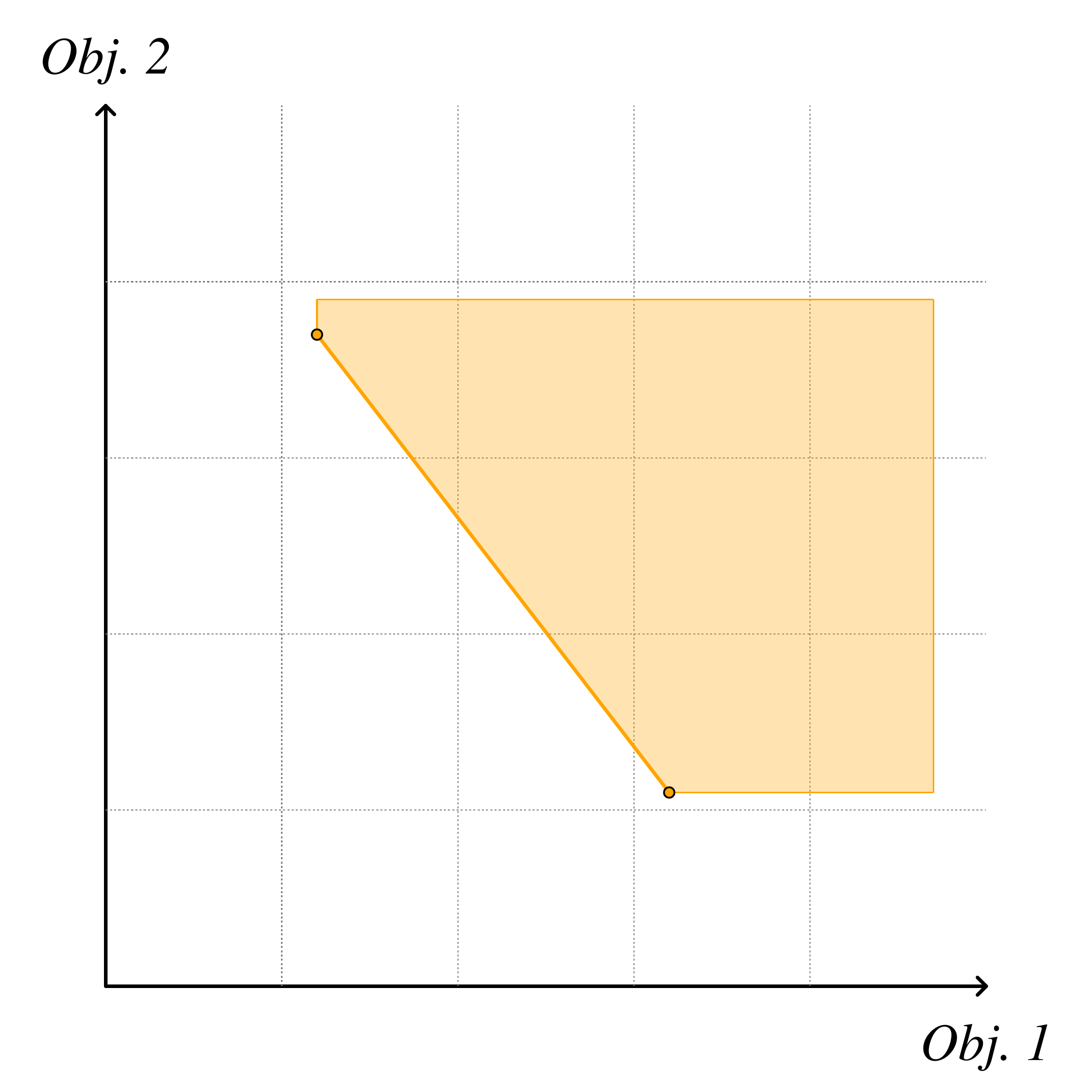}};
        \begin{scope}[x={(image.south east)},y={(image.north west)}]
          \large
          \node[anchor=north east] at (0.3,0.69) {$s.p$};
          \node[anchor=north east] at (0.62,0.28) {$s.q$};
          \node[anchor=south west] at (0.84,0.72) {$s.c$};
        \end{scope}
      \end{tikzpicture}
    }
    \subfigure{
      \begin{tikzpicture}
        \node[anchor=south west,inner sep=0] (image) at (0,0) {\includegraphics[scale=.35]{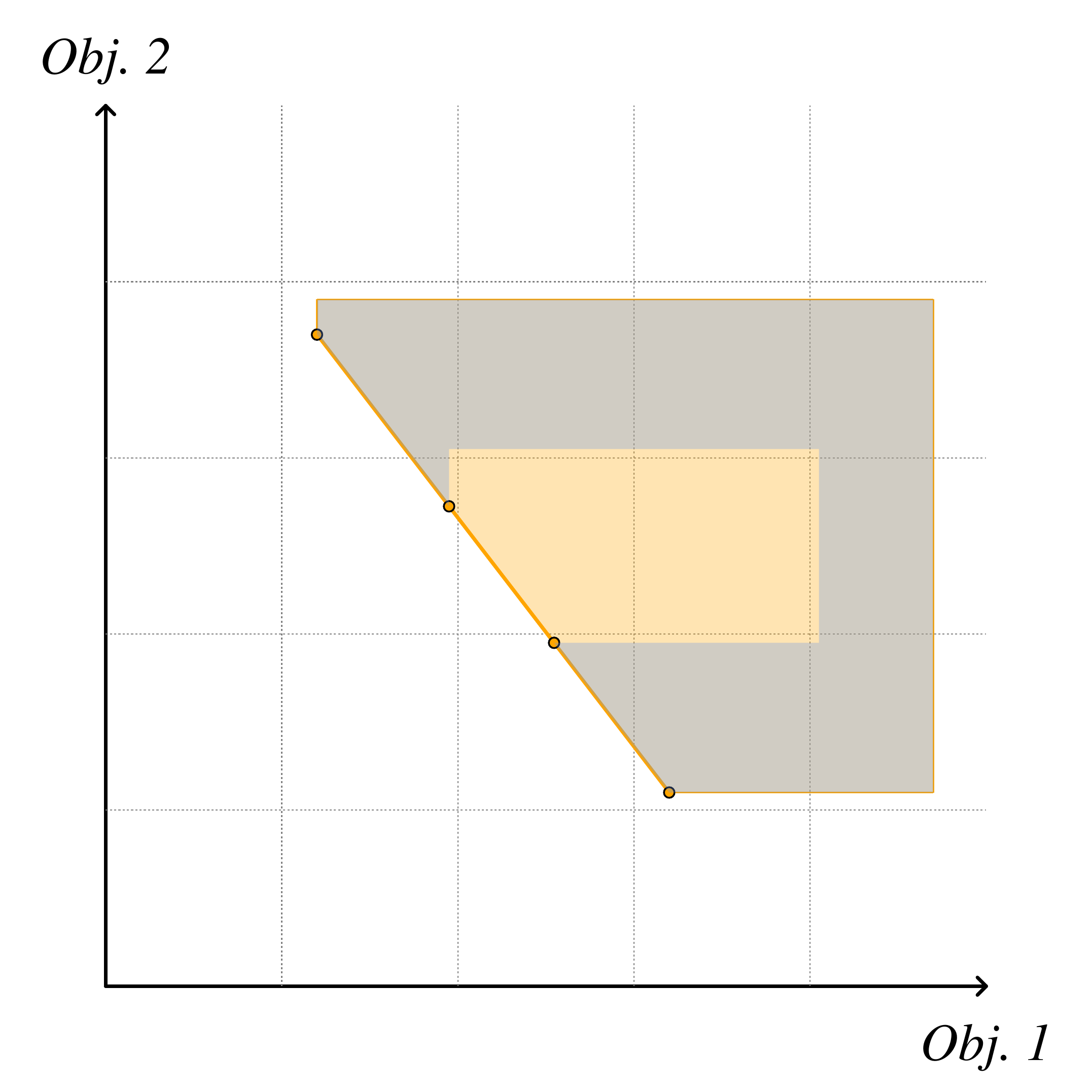}};
        \begin{scope}[x={(image.south east)},y={(image.north west)}]
          \large
          \node[anchor=north east] at (0.3,0.69) {$s.p$};
          \node[anchor=north east] at (0.62,0.28) {$s.q$};
          \node[anchor=south west] at (0.84,0.72) {$s.c$};
           \node[anchor=north east] at (0.43,0.56) {$s'.p$};
          \node[anchor=north east] at (0.52,0.44) {$s'.q$};
          \node[anchor=south west] at (0.73,0.57) {$s'.c$};
        \end{scope}
      \end{tikzpicture}
    }
  \end{center}
  \caption{The space covered by a LB segment contains contiguous regions which
    do not contain any point with integer coordinates; these parts are
    irrelevant to the search. Tightened segment $s'$ contains all the integer
    solutions contained by original segment $s$.}
  \label{fig:tighten-segment}
\end{figure}
This general idea can be exploited in several ways in order to speed up the
search.

\subsection{Segment tightening}

First, any LB segment which does not cover any integer vector can be discarded. This can be
tested in $\mathcal{O}(1)$.

We first investigate the case where $s.p \neq s.q$, then $s.a$ and $s.b$
are defined:
\begin{proposition}
Segment $s$ with $s.p \neq s.q$ covers integer vectors 
iff $\lfloor s.c_2 \rfloor \geq s.a \lfloor s.c_1 \rfloor
+ s.b \land \lfloor s.c_2 \rfloor \geq s.q_2 \land \lfloor
s.c_1 \rfloor \geq  s.p_1$.
\end{proposition}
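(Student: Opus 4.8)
The plan is to characterize exactly when the convex polygon associated with $s$ — namely the region bounded by $s.p$, $s.q$, $(s.c_1, s.q_2)$, $s.c$, $(s.p_1, s.c_2)$ — contains a point with integer coordinates. The key observation is that if the polygon contains any integer point $(z_1, z_2)$, then it also contains the ``pushed-up-and-right'' integer point $(\lfloor s.c_1 \rfloor, \lfloor s.c_2 \rfloor)$: increasing a coordinate of a point inside the polygon keeps it inside as long as we do not exceed $s.c_1$ in the first coordinate or $s.c_2$ in the second, because the polygon is ``staircase-closed'' towards its local nadir corner $s.c$ (the only binding constraints from above are $z_1 \le s.c_1$ and $z_2 \le s.c_2$, and the lower-left boundary is the line $z_2 \ge s.a z_1 + s.b$ together with $z_1 \ge s.p_1$ and $z_2 \ge s.q_2$, all of which are preserved under increasing coordinates since $s.a < 0$). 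Hence the polygon contains an integer point if and only if the specific candidate $(\lfloor s.c_1 \rfloor, \lfloor s.c_2 \rfloor)$ lies in the polygon.

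So the argument has two directions. For the ``if'' direction, I would simply verify that the three inequalities in the statement are precisely the conditions for $(\lfloor s.c_1 \rfloor, \lfloor s.c_2 \rfloor)$ to satisfy all five half-plane constraints defining the polygon: $\lfloor s.c_1 \rfloor \le s.c_1$ and $\lfloor s.c_2 \rfloor \le s.c_2$ hold trivially; $\lfloor s.c_2 \rfloor \ge s.a \lfloor s.c_1 \rfloor + s.b$ is the line constraint; $\lfloor s.c_1 \rfloor \ge s.p_1$ and $\lfloor s.c_2 \rfloor \ge s.q_2$ are the remaining two edges (here I should note $s.c_1 \ge s.q_1 > s.p_1$ and $s.c_2 \ge s.p_2 > s.q_2$ from the segment definition, so these are the genuinely binding side constraints). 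For the ``only if'' direction, I would take an arbitrary integer $(z_1, z_2)$ in the polygon, argue by the monotonicity/staircase property above that $(\lfloor s.c_1 \rfloor, \lfloor s.c_2 \rfloor)$ — which has both coordinates at least as large as $z_1, z_2$ respectively (since $z_1 \le s.c_1$ forces $z_1 \le \lfloor s.c_1 \rfloor$ for integer $z_1$, and likewise for $z_2$) — is also in the polygon, and then read off the three inequalities.

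The main obstacle is making the ``monotonicity'' step rigorous without hand-waving: I need to confirm that every defining inequality of the polygon other than the two upper bounds $z_1 \le s.c_1$, $z_2 \le s.c_2$ is monotone nondecreasing when either coordinate is increased. This is where the sign of the slope matters — $s.a = (s.q_2 - s.p_2)/(s.q_1 - s.p_1) < 0$ because $s.p_1 < s.q_1$ and $s.p_2 > s.q_2$ — so $z_2 \ge s.a z_1 + s.b$ only becomes easier to satisfy as $z_1$ grows, and trivially easier as $z_2$ grows. The side constraints $z_1 \ge s.p_1$ and $z_2 \ge s.q_2$ are obviously monotone. Once that lemma is in place, the equivalence with the floored candidate point is immediate and the proposition follows by checking the five constraints at $(\lfloor s.c_1 \rfloor, \lfloor s.c_2 \rfloor)$, two of which are automatic and three of which are exactly the claimed conditions.
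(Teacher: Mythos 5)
Your proposal is correct and follows essentially the same route as the paper's proof: both reduce the question to whether the single candidate point $(\lfloor s.c_1 \rfloor, \lfloor s.c_2 \rfloor)$ lies in the polygon, using the fact that any integer point in the segment has coordinates bounded above by those of this candidate. Your write-up is in fact somewhat more explicit than the paper's about why the candidate inherits membership from an arbitrary integer point (the monotonicity of the constraints under coordinate increases, relying on $s.a<0$), a step the paper leaves largely implicit.
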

\begin{proof}
A segment $s$ is a polygon defined by the points $s.p$, $s.q$, $(s.c_1,s.q_2)$, $s.c$, $(s.p_1,s.c_2)$. The coordinates of any integer point $z = (z_1,z_2)$ within this polygon have to satisfy $s.p_1\leq z_1 \leq s.c_1$ and $s.q_2 \leq z_2 \leq s.c_2$ and $z_2 \geq s.a \cdot z_1 + s.b$. The integer point closest to the local nadir point $s.c$ which has the potential to lie within the polygon has coordinates $(\lfloor s.c_1 \rfloor, \lfloor s.c_2 \rfloor)$. Any other integer point $z = (z_1,z_2)$ contained within the region of the segment must have coordinates  $z_1 \leq \lfloor s.c_1 \rfloor$ and $z_2 \leq \lfloor s.c_2 \rfloor$. Therefore, if $(\lfloor s.c_1 \rfloor, \lfloor s.c_2 \rfloor)$ is contained within $s$, $s$ covers at least one integer point. $\square$
\end{proof}

The special case of segments defined by two points only (see
Definition~\ref{def:onepointsegment}) is even simpler, as the space covered by such a
segment is a rectangle. Consider the segment defined by point $p = (p_1, p_2)$ and local
nadir point $c = (c_1, c_2)$, then this segment contains integer points
iff $\lfloor c_1 \rfloor \geq z_1 \land \lfloor c_2 \rfloor \geq z_2$.

This is tested right after filtering:
only the segments in the LB set that cover an integer point are kept.
This enhancement is called \emph{segment tightening}.
\begin{figure}
  \begin{center}
      \begin{tikzpicture}
        \node[anchor=south west,inner sep=0] (image) at (0,0) {\includegraphics[scale=.35]{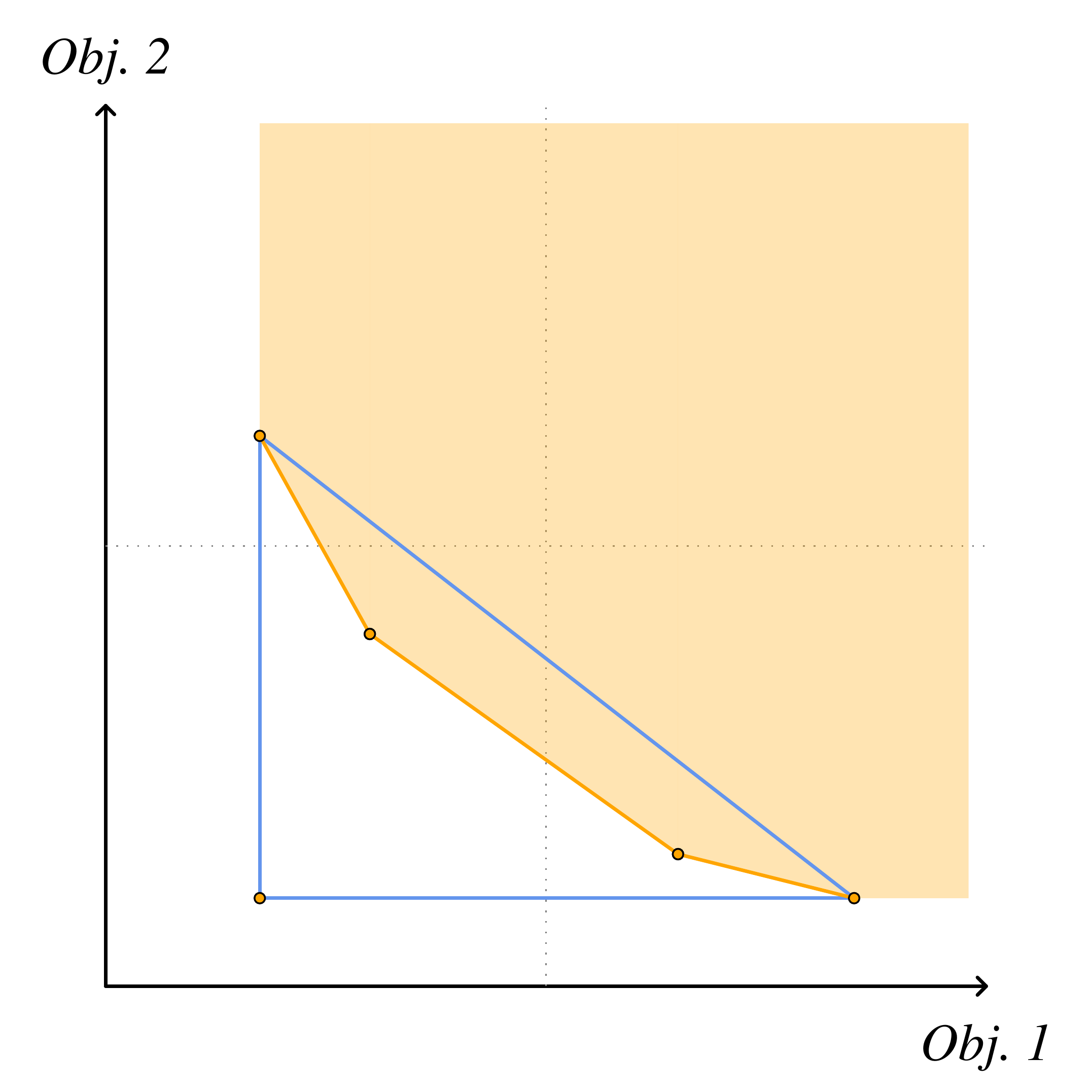}};
        \begin{scope}[x={(image.south east)},y={(image.north west)}]
          \large
          \node[anchor=north east] at (0.25,0.6) {$p$};
          \node[anchor=north east] at (0.79,0.19) {$q$};
          \node[anchor=north ] at (0.25,0.19) {$(p_1, q_2)$};
        \end{scope}
      \end{tikzpicture}
  \end{center}
  \caption{Segment between points $p = (p_1, p_2)$ and $q =(q_1,q_2)$ defines a valid LB
    segment because
    there is no vector of integer coordinates in the triangle defined by $p$,
    $q$ and $(p_1, q_2)$.}
  \label{fig:liftHull}
\end{figure}

\subsection{Lower bound lifting}
\label{sec:LBlifting}
Second, we can use the fact that feasible integer solutions have integer
objective values in order to speed up our LB set computation procedure.
We call this procedure $bound$ (it is given in Algorithm~\ref{alg:aneja}).

\begin{algorithm}
  \caption{$bound( node, UB )$}
   \begin{algorithmic}[1]
    \STATE $C \gets \emptyset$, $E \leftarrow \emptyset$
    \STATE $\bar{x}^1 \gets solve(lexmin(f_1, f_2), node, UB)$
    \IF {$\bar{x}^1= infeasible$}
    \RETURN $E$
    \ENDIF
    \STATE $\bar{x}^2 \gets solve(lexmin(f_2, f_1), node, UB )$
    \STATE $p \gets (f_1(\bar{x}^1),f_2(\bar{x}^1))$, $q \gets (f_1(\bar{x}^2),f_2(\bar{x}^2))$
    \STATE $push( C, (p,q))$
    \WHILE {$C \neq \emptyset$}
    \STATE $(p, q) \gets pop(C)$ \label{algoline:aneja:popC}
    \STATE $s \leftarrow Segment(p, q, (\infty,\infty))$
    \IF {$\lceil q_2 \rceil \geq s.b + s.a \lceil p_1\rceil$} 
       \STATE $E \gets E \cup \{(p, q)\}$ 
    \ELSE	
    \STATE $w_2 \gets q_1 - p_1$
    \STATE $w_1 \gets p_2 - q_2$
    \STATE $\bar{x} \gets  solve(w_1 f_1 + w_2 f_2, node, UB)$
     \IF {$w_1 f_1(\bar{x}) + w_2 f_2(\bar{x}) < w_1 p_1 + w_2 p_2 $}
    \STATE $u \gets (f_1(\bar{x}),f_2(\bar{x}))$
    \STATE $push( C, (p,  u))$
    \STATE $push( C, (u, q))$
    \ELSE
    \STATE $E \gets E \cup \{(p, q)\}$ 
    \ENDIF
    \ENDIF
    \ENDWHILE
    \RETURN $E$
  \end{algorithmic}
  \label{alg:aneja}
\end{algorithm}
As in the original algorithm of~\cite{Aneja:1979}, procedure $bound$
successively computes supported points, starting with the two
points
at the two ends of the efficient frontier, using lexicographic minimum
objective functions $lexmin(f_1, f_2)$ and $lexmin(f_2, f_1)$. This is similar
to what is done by~\cite{Boland:2015} in the context of the balanced box method.

For any two given points, solving a certain
weighted-sum problem determines whether there exists another supported solution
``between'' these two.
However, it can happen that these two points already describe a valid LB
segment (given an appropriate nadir point), regardless of any other extreme
supported point between them.
If there is no point with 
integer coordinates in the triangle defined by $p = (p_1, p_2)$, $q = (q_1,
q_2)$ and $(p_1, q_2)$, 
then $p$ and $q$ already define a valid LB segment and there is no 
need to investigate these two points further.
We note that point $(p_1, q_2)$ can also be disregarded because it 
dominates both $p$ and $q$, although it is already established
that they are both non-dominated.
Moreover, if the line connecting $p$ and $q$ is not a part of the convex hull
boundary then 
it defines a tighter lower bound than the one induced by the actual convex hull
boundary between $p$ and $q$.
This situation is illustrated in Figure~\ref{fig:liftHull}, where
$p$ and $q$ define a valid lower bound segment (given an
appropriate local nadir point).

As previously, testing for the existence of an integer vector in the
given triangle can be achieved in $\mathcal{O}(1)$ and sometimes allows to
skip some stages of the algorithm while providing a tighter LB set. 
In the
notation of Algorithm~\ref{alg:aneja}, this means that after retrieving
$(p, q)$ from $C$
(line~\ref{algoline:aneja:popC}, Algorithm~\ref{alg:aneja}), we check if
the triangle derived from $(p,q)$ does not contain any
integer point. 
\begin{proposition}
If the point $(\lceil p_1\rceil, \lceil q_2 \rceil)$ is above the line
connecting $p$ and $q$, i.e. $\lceil q_2 \rceil \geq s.a \lceil p_1\rceil +
s.b$, then
the triangle derived from $(p,q)$ does not contain any
integer point.
\end{proposition}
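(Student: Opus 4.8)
The plan is to show the contrapositive-style statement directly: assuming the single point $(\lceil p_1 \rceil, \lceil q_2 \rceil)$ lies on or above the line through $p$ and $q$, I would argue that \emph{no} integer point can lie strictly inside the triangle with vertices $p$, $q$, and $(p_1, q_2)$. First I would recall the geometry: since $p_1 < q_1$ and $p_2 > q_2$ (these are consecutive extreme points on a lower-bound frontier), the triangle sits in the rectangle $[p_1, q_1] \times [q_2, p_2]$, and the hypotenuse is the segment of the line $z_2 = s.a \cdot z_1 + s.b$ between $p$ and $q$, with negative slope $s.a$. A point $z = (z_1, z_2)$ is in the (closed) triangle iff $p_1 \le z_1 \le q_1$, $q_2 \le z_2 \le p_2$, and $z_2 \le s.a \cdot z_1 + s.b$ (the triangle is the part of the rectangle \emph{below} the line; this is the region the segment would wrongly claim if it contained integer points).

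The key step is the following monotonicity observation. If $z = (z_1, z_2)$ is an integer point in that triangle, then from $p_1 \le z_1$ and integrality of $z_1$ we get $z_1 \ge \lceil p_1 \rceil$; similarly from $q_2 \le z_2$ we get $z_2 \ge \lceil q_2 \rceil$. Now I use that $s.a < 0$: the line constraint $z_2 \le s.a \cdot z_1 + s.b$ combined with $z_1 \ge \lceil p_1 \rceil$ gives $z_2 \le s.a \cdot z_1 + s.b \le s.a \lceil p_1 \rceil + s.b$. Chaining these, $\lceil q_2 \rceil \le z_2 \le s.a \lceil p_1 \rceil + s.b$, i.e. the point $(\lceil p_1 \rceil, \lceil q_2 \rceil)$ lies strictly below (or on) the line — precisely the negation of the hypothesis. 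Hence under the hypothesis no such integer $z$ exists, and the triangle contains no integer point.

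I would also dispatch the degenerate possibility that an integer point lies on the boundary edges $p$–$(p_1,q_2)$ or $q$–$(p_1,q_2)$ rather than the hypotenuse: such a point has first coordinate $p_1$ or second coordinate $q_2$, and the argument above still applies verbatim since it only used the inequalities $z_1 \ge \lceil p_1 \rceil$ and $z_2 \ge \lceil q_2 \rceil$ together with the line inequality, which the vertices $p$ and $q$ themselves satisfy with equality on the line. So the conclusion covers the closed triangle. The main (and really only) subtlety is getting the direction of the inequality right when multiplying by the negative slope $s.a$; everything else is a short chain of elementary inequalities, and no case analysis beyond noting $s.a < 0$ is needed.
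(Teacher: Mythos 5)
Your proof is correct and follows essentially the same route as the paper's: both arguments observe that any integer point in the triangle must satisfy $z_1 \geq \lceil p_1 \rceil$ and $z_2 \geq \lceil q_2 \rceil$ and then conclude from the position of $(\lceil p_1\rceil, \lceil q_2\rceil)$ relative to the line that no such point can lie below it. You are in fact somewhat more explicit than the paper about the key monotonicity step (multiplying $z_1 \geq \lceil p_1\rceil$ by the negative slope $s.a$), and the only blemish --- calling $\lceil q_2\rceil \leq s.a\lceil p_1\rceil + s.b$ ``precisely the negation'' of the hypothesis when equality is possible (an integer point sitting exactly on the hypotenuse) --- is a boundary case the paper's own proof glosses over equally and which is harmless for the validity of the lower bound segment.
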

\begin{proof}
Since $p$ and $q$ are extreme points of the boundary of the convex hull, an additional LB point may only be found in the right triangle defined by the points $p$, $(p_1, q_2)$, $q$, with $p_1 \leq q_1$ and $q_2 \leq p_2$. Let us call $(p_1, q_2)$ the local ideal point.
Since $\lceil p_1 \rceil \geq p_1$ and $\lceil q_2 \rceil \geq  q_2$, $(\lceil p_1\rceil, \lceil q_2 \rceil)$ must either lie within the triangle or it is above the line connecting $p$ and $q$. Since any other integer point $z = (z_1,z_2)$ must have coordinates satisfying $z_1 \geq \lceil p_1\rceil$ and $z_2 \geq \lceil q_2 \rceil$, if $(\lceil p_1\rceil, \lceil q_2 \rceil)$ lies above the line connecting $p$ and $q$, no other integer point within this triangle exists.$\square$
\end{proof}
In this case, the segment derived from $(p, q)$ is appended to $E$ in Algorithm~\ref{alg:aneja} and lines 13--21 are skipped.
Otherwise a weighted sum problem ($w_1 f_1 + w_2 f_2$) is solved, where
$w_1$ and $w_2$ are determined by the algorithm. In the case where the
resulting point $u$ lies below the line connecting $p$ and $q$, $(p, u)$
and $(u,q)$ are stored in $C$ for further processing, as in the algorithm by~\cite{Aneja:1979}.
This enhancement is called \emph{lower bound lifting}.

\subsection{Integer dominance}

Third, it is also possible to exploit the
integrality of objective values for feasible integer solutions when branching
on objective space. If we are in
the situation of branching on objective space, it is typically because part of
the LB set has been dominated by the UB set. In that case, we already know that
the given local nadir point of any LB segment is in fact also dominated by the
UB set (this is actually how this local nadir point is calculated, see
Algorithm~\ref{alg:filter-segment}). So this local nadir point can be
tightened.
\begin{figure}
  \begin{center}
    \subfigure[]{
      \begin{tikzpicture}
        \node[anchor=south west,inner sep=0] (image) at (0,0) 
        { \includegraphics[scale=.35]{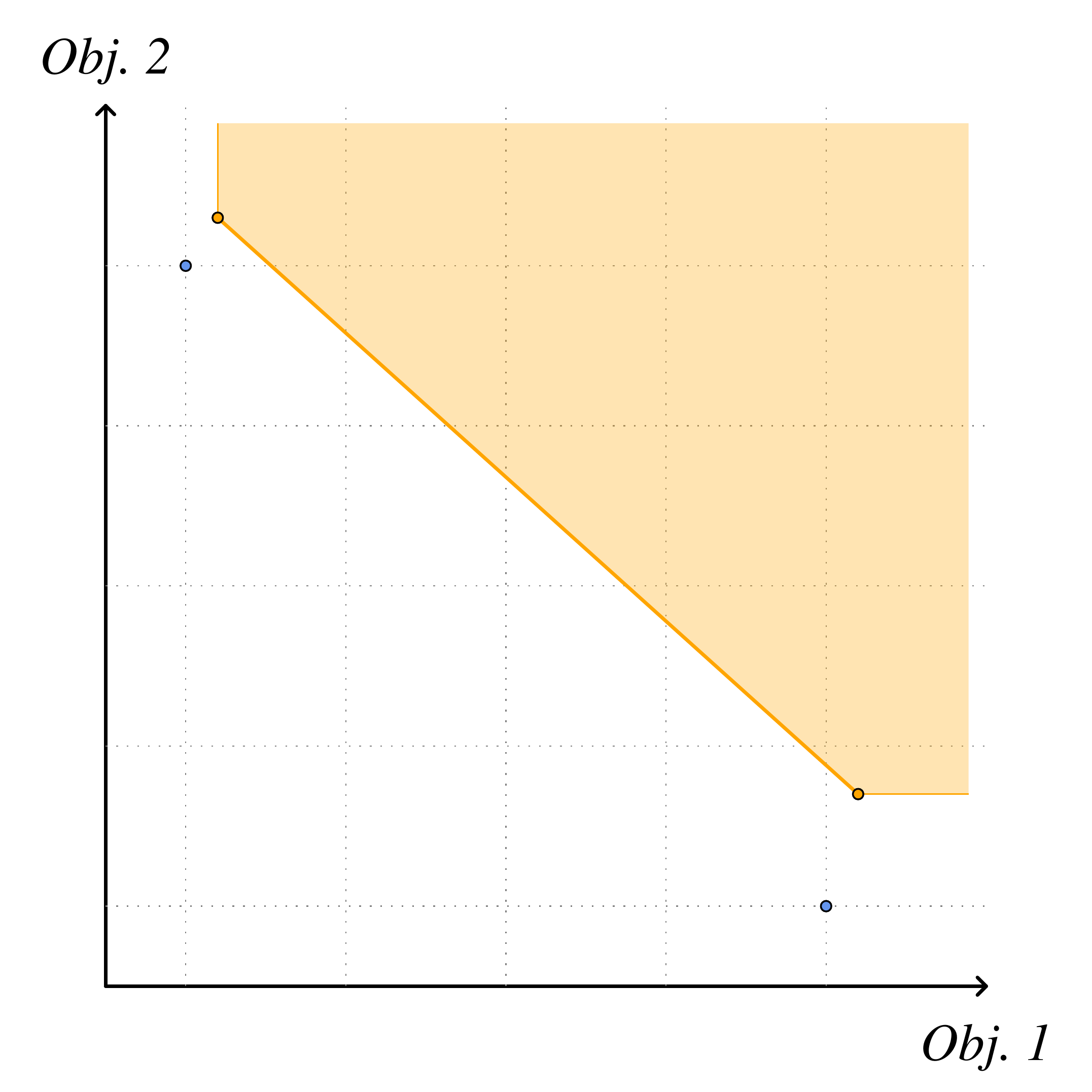}};
        \begin{scope}[x={(image.south east)},y={(image.north west)}]
          \large
          \node[anchor=north east] at (0.32,0.85) {$s.p$};
          \node[anchor=north east] at (0.9,0.33) {$s.q$};
           \node[anchor=north east] at (0.18,0.75) {$u$};
          \node[anchor=north east] at (0.76,0.18) {$v$};
        \end{scope}
      \end{tikzpicture}
      \label{subfig:lift1}
    }
    \subfigure[]{
      \begin{tikzpicture}
        \node[anchor=south west,inner sep=0] (image) at (0,0) 
        { \includegraphics[scale=.35]{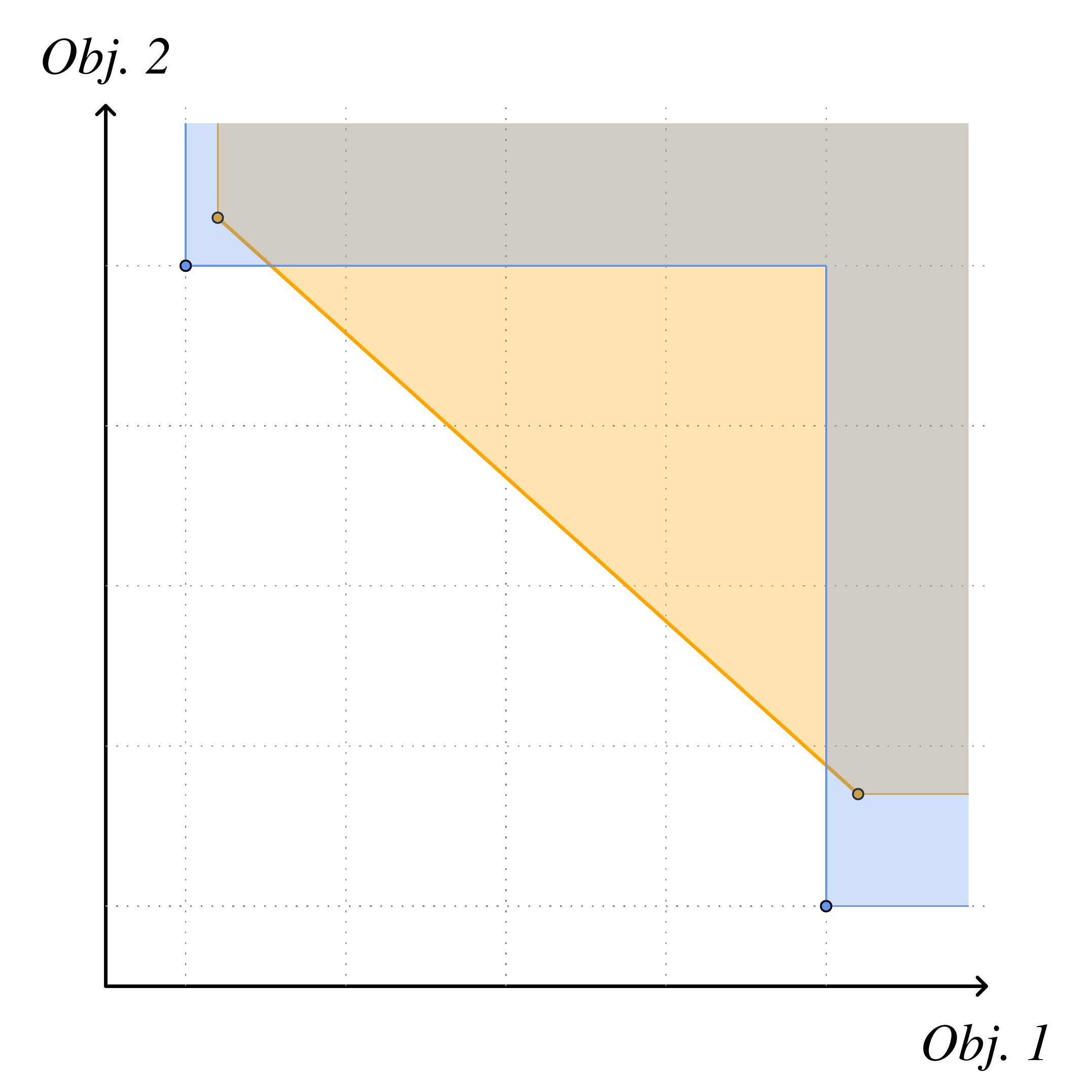}};
        \begin{scope}[x={(image.south east)},y={(image.north west)}]
          \large
          \node[anchor=north east] at (0.32,0.85) {$s.p$};
          \node[anchor=north east] at (0.9,0.33) {$s.q$};
         \node[anchor=north east] at (0.18,0.75) {$u$};
          \node[anchor=north east] at (0.76,0.18) {$v$};
        \end{scope}
      \end{tikzpicture}
      \label{subfig:lift2}
    }
    \subfigure[]{
       \begin{tikzpicture}
        \node[anchor=south west,inner sep=0] (image) at (0,0) 
        { \includegraphics[scale=.35]{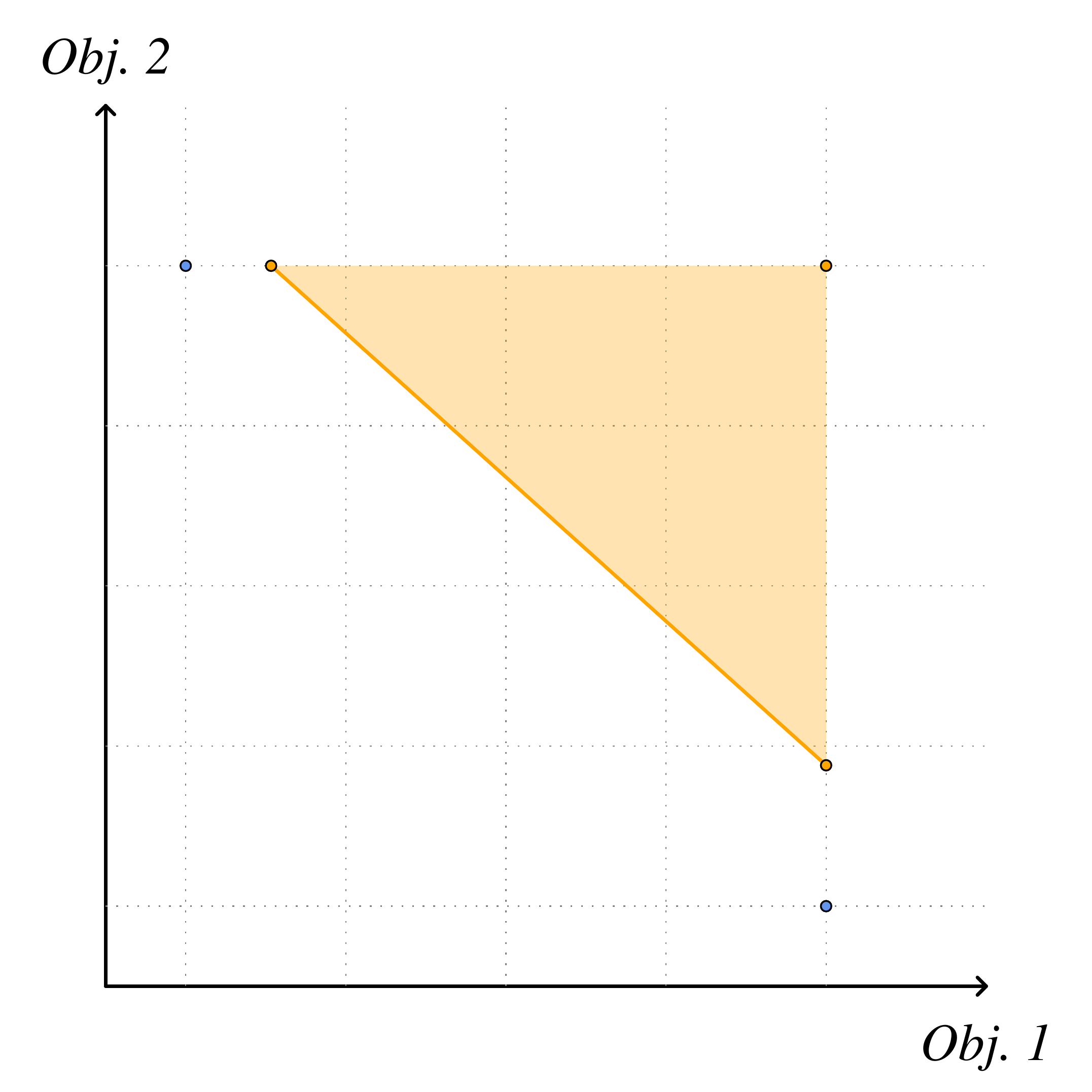}};
        \begin{scope}[x={(image.south east)},y={(image.north west)}]
          \large
          \node[anchor=south west] at (0.23,0.74) {$s'.p$};
          \node[anchor=north east] at (0.9,0.38) {$s'.q$};
         \node[anchor=south west] at (0.75,0.74) {$s'.c$};
           \node[anchor=north east] at (0.18,0.75) {$u$};
          \node[anchor=north east] at (0.76,0.18) {$v$};
        \end{scope}
      \end{tikzpicture}
      \label{subfig:lift3}
    }
    \subfigure[]{
      \begin{tikzpicture}
        \node[anchor=south west,inner sep=0] (image) at (0,0) 
        { \includegraphics[scale=.35]{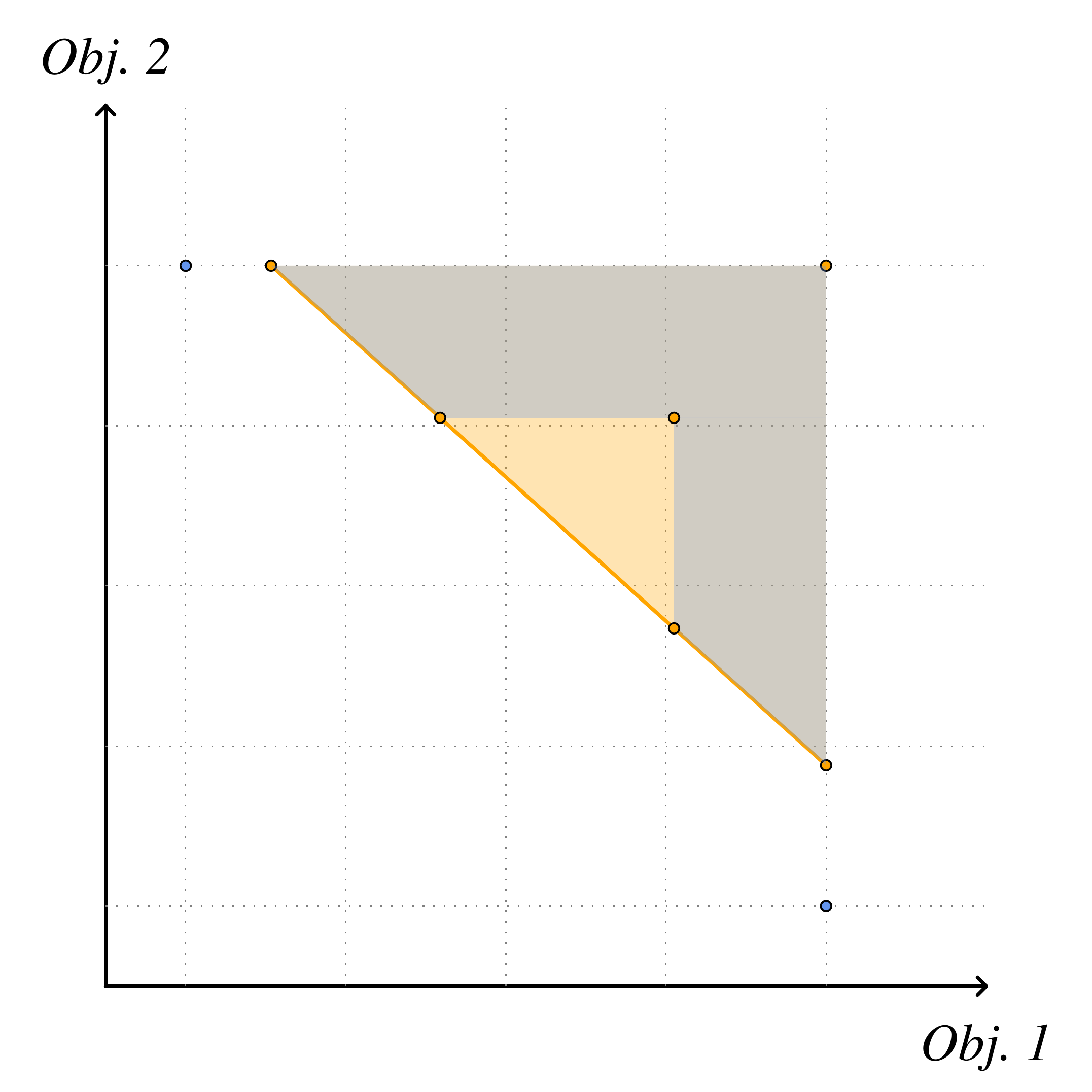}};
        \begin{scope}[x={(image.south east)},y={(image.north west)}]
          \large
             \node[anchor=south west] at (0.23,0.74) {$s'.p$};
          \node[anchor=north east] at (0.9,0.38) {$s'.q$};
         \node[anchor=south west] at (0.75,0.74) {$s'.c$};
             \node[anchor=south west] at (0.38,0.6) {$s''.p$};
          \node[anchor=north east] at (0.76,0.5) {$s''.q$};
         \node[anchor=south west] at (0.6,0.6) {$s''.c$};
           \node[anchor=north east] at (0.18,0.75) {$u$};
          \node[anchor=north east] at (0.76,0.18) {$v$};
        \end{scope}
      \end{tikzpicture}
      \label{subfig:lift4}
    }
  \end{center}
  \caption{Part of a LB segment $s$
  is either still dominated by the UB (represented by $u$ and $v$) or not
    containing any integer vector (darker shaded area in the fourth picture).}
  \label{fig:lift-objspace}
\end{figure}
 Figure~\ref{fig:lift-objspace} illustrates this principle: after filtering the
 LB segment $s$ with points from the upper bound set, 
$u$ and $v$ (Figures~\ref{subfig:lift1}
and \ref{subfig:lift2}), we obtain a reduced segment $s'$. The new local nadir point $s'.c$
is actually dominated by both $u$ and $v$
(Figure~\ref{subfig:lift3}). Moreover, each point in the area which is in a
darker shade (Figure~\ref{subfig:lift4}) is either (i) dominated by $u$ and/or
$v$ or (ii) not integer. Therefore we can disregard the whole area. 
This can be achieved by subtracting any value $\rho$ such that $0 \leq \rho <
1$  from both coordinates of the local nadir point $s'.c$ after filtering,
thus producing an improved segment $s''$.
In order to implement this idea in our algorithm, we simply subtract $\rho$
from both coordinates of each upper bound point $u$ used in
Algorithm~\ref{alg:filter-segment}.
This enhancement is called \emph{integer dominance}. It is in fact similar to
using an $\epsilon$ value of 1 in the $\epsilon$-constraint framework, or in
the balanced box method.

\section{Computational study}\label{sec:exp}

We investigate the efficiency of the proposed algorithm and enhancements
through experimentation. 

A generic version of BIOBAB is developed in Python 2.7 
and Gurobi
6.5.0 is used to solve LPs and MIPs. Algorithms are run on an Intel Xeon
E5-2650v2 CPU at 2.6 GHz, with a 4 GB memory limit and a two-hour CPU time
limit. Except when explicitly stated otherwise, BIOBAB uses the linear
relaxation of the original IP to compute lower bound sets. In terms of tree
exploration, BIOBAB always performs a breadth-first search.
Like in~\cite{Boland:2015}, we use \emph{performance profiles} to compare the
performances of different algorithms (\cite{Dolan2002}). The
performance of an algorithm for a certain instance is the ratio of the CPU time
required by that algorithm for that instance to the best CPU time for that
instance. The
$x$-axis represents performance, while the $y$-axis represents the
fraction of instances solved at a certain performance or better.
If an algorithm does not terminate (i.e. find the Pareto set) for a
certain instance, then it does not offer a performance for that instance. If no
algorithm terminates for a given instance, then this instance is discarded.

In the following, we first describe the considered problems and benchmark
instances for the first set of experiments, where we evaluate the different
components of the algorithm and the impact of the integrality-based
improvements. We then compare the results of our method with those of criterion space search algorithms and to an existing branch-and-bound method. Thereafter, we formally introduce the bi-objective team orienteering problem with time windows (BITOPTW), whose linear relaxation is solved by means of column generation, we explain the different components tailored to this problem, and we report the obtained results.

\subsection{Considered problems and benchmark instances}

For our first set of experiments, we apply BIOBAB to two different uncapacitated
bi-objective facility location problems. The first uncapacitated facility
location problem, which we call \emph{uncapacitated bi-objective facility
  location problem} (UBOFLP), considers the cost of opening facilities as first
objective and the coverage of the population as second objective. Population
from a certain node can only be covered by an open facility if the distance
between the two is below a threshold. For the UBOFLP, we use the data that was
also used to generate instances from the bi-objective stochastic covering tour
problem presented in~\cite{tricoire2012}. The following simplifications are
performed: (i) the transportation costs are ignored, (ii) the stochastic aspect
is ignored and (iii) the distance aspect is simplified so that a node is either
entirely covered or not covered at all by an open facility. In order to produce
instances with various levels of difficulty, we concatenate the original
instances: the $k^{th}$ UBOFLP instance consists of the data of the
first $k$ covering tour problem instances. In total there are 36 UBOFLP
instances, including 21 to 500 locations.
The second
uncapacitated facility location problem we consider is a simplification of the
single-source capacitated facility location problem, where the capacity
constraints are relaxed. The first objective is to minimize the cost of opening
facilities, while the second objective is to minimize assignment costs. We name
this problem SSUFLP from now on. We use the 120 
instances
publicly available from \cite{gadegaard2016}. For both UBOFLP and SSUFLP, decision space branching is
performed in a generic way. First, the average value of
each binary variable over the whole LB set (or rather subset,
see Section~\ref{sec:branching}) is computed. Then, the variable with
average value closest to 1 is selected for binary branching on that subset.
 We provide mathematical formulations for UBOFLP and SSUFLP in
 Appendices~\ref{sec:uboflp-model} and \ref{sec:ssuflp-model}. For each of
 these problems, we compute the greatest common divisor for each objective and
 use it as valid $\epsilon$ value in the various methods. Since all costs are
 integer, $\epsilon$ is always at least 1.

We provide information on the cardinality of efficient sets for UBOFLP and
SSUFLP instances in Figure~\ref{fig:nPoints}.
\begin{figure}
  \centering
  \includegraphics[scale=.6]{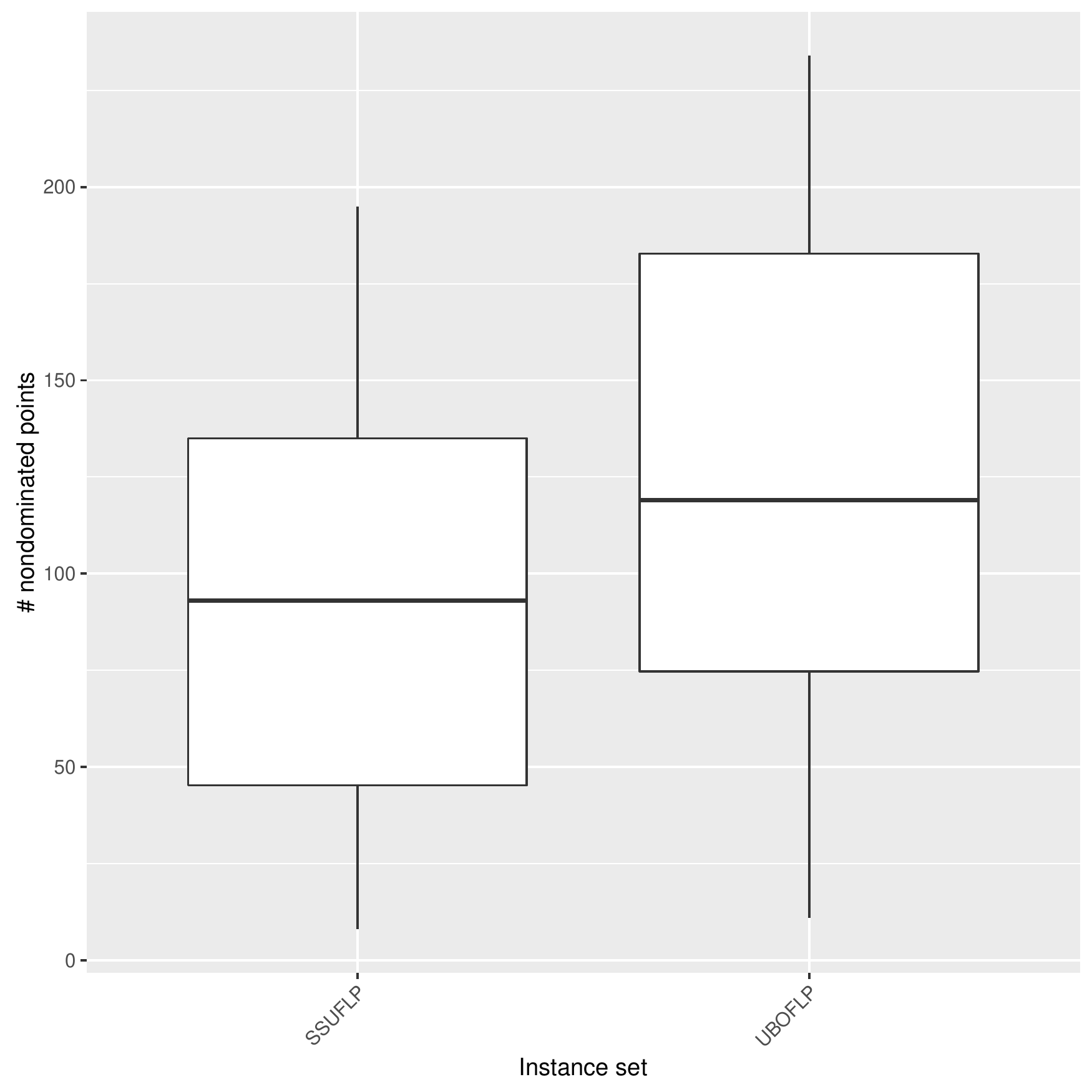}
  \caption[caption]{Number of non-dominated points for SSUFLP and UBOFLP instance
    sets.\\
    SSUFLP: 120 instances.\\
    UBOFLP: 36 instances.}
  \label{fig:nPoints}
\end{figure}
 
\subsection{Impact of integrality-based improvements}

In a first series of experiment, we assess the efficiency of the improvements
described in Section~\ref{sec:integralityImprovements}, using the UBOFLP and
SSUFLP data set. We run a version of
BIOBAB with no improvement at all, a version
with objective space branching only, a version with segment tightening only, a
version with lower bound lifting only, and a version with all three
improvements.
Performance profiles are
displayed in Figure~\ref{fig:biobab-improvements}.
\begin{figure}
  \subfigure[UBOFLP]{
    \includegraphics[scale=.4]{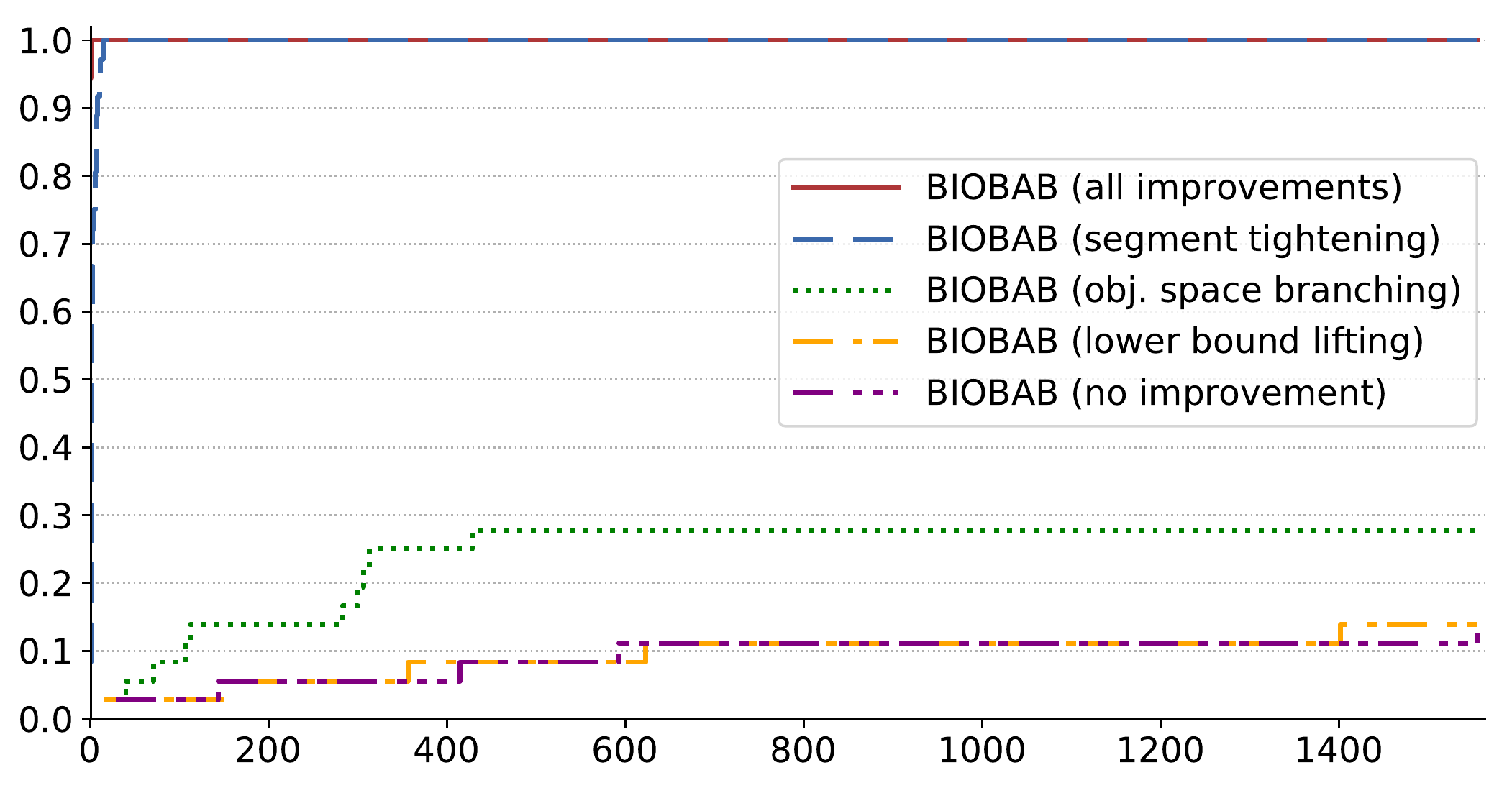}
    }
  \subfigure[SSUFLP]{
    \includegraphics[scale=.4]{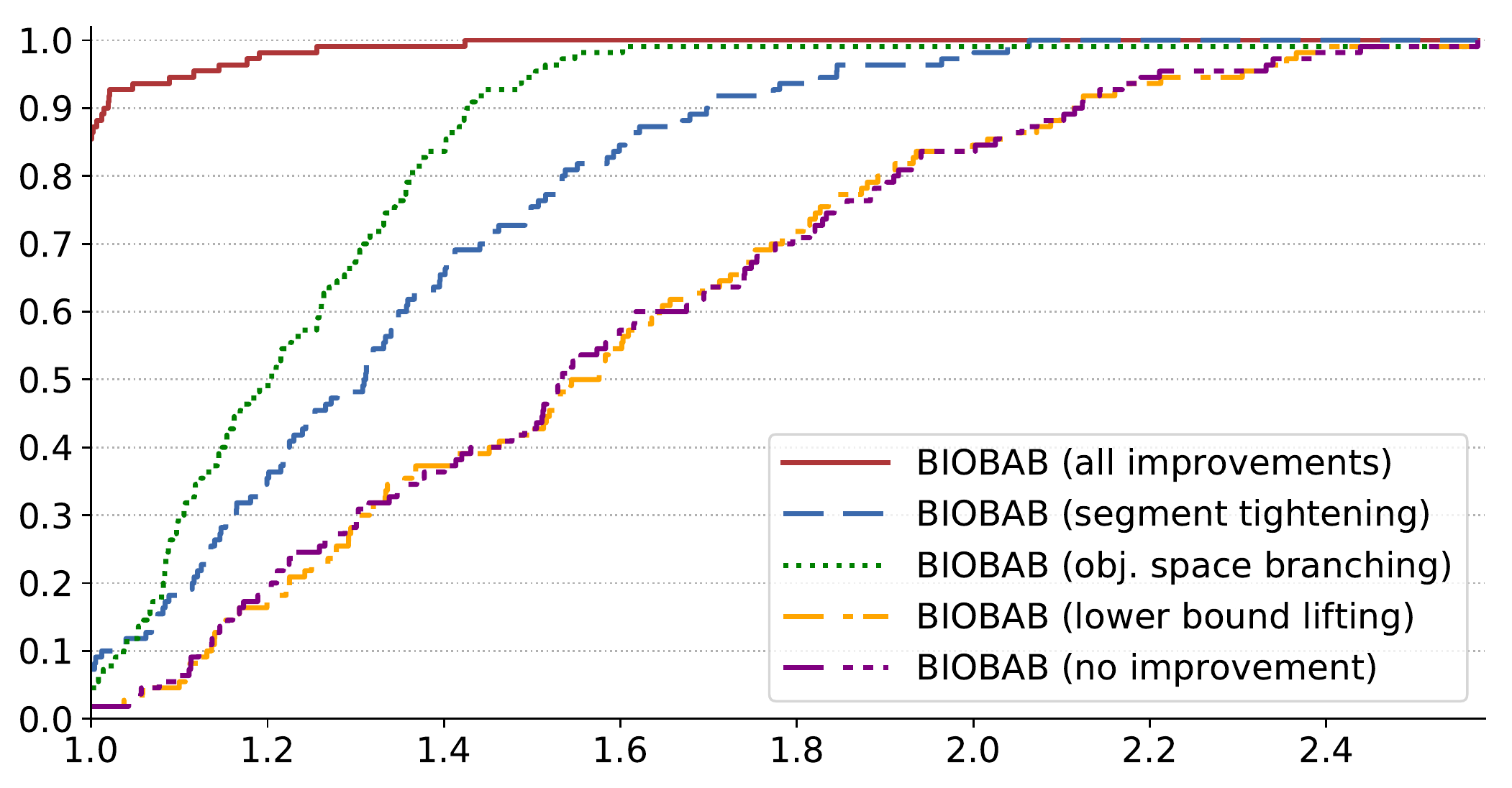}
    }
  \caption{BIOBAB: individual contribution of various improvements}
  \label{fig:biobab-improvements}
\end{figure}
On UBOFLP instances, full BIOBAB as well as segment tightening clearly
outperform the other variants. Whether full BIOBAB outperforms the version with
segment tightening only is unclear, because the whole chart
is scaled to also include the performance profiles of poorly performing
variants. For the same reason, it is also unclear whether lower bound lifting
brings a significant improvement over bare BIOBAB. On SSUFLP instances, the
differences are smaller but still clearly marked; full BIOBAB is still the
best, but this time it appears that objective-space branching is working better
on its own than segment tightening. Again, it is unclear whether lower bound
lifting brings any improvement.

We further investigate the individual contributions of the various
BIOBAB components, by comparing full BIOBAB to full BIOBAB minus individual
improvements. We first look at segment tightening, which appears to be crucial
in certain cases, as shown in
Figure~\ref{fig:biobab-improvements}. Figure~\ref{fig:biobab-full-vs-nst}
presents the performance of full BIOBAB with and without segment tightening.
\begin{figure}
  \subfigure[UBOFLP]{
    \includegraphics[scale=.4]{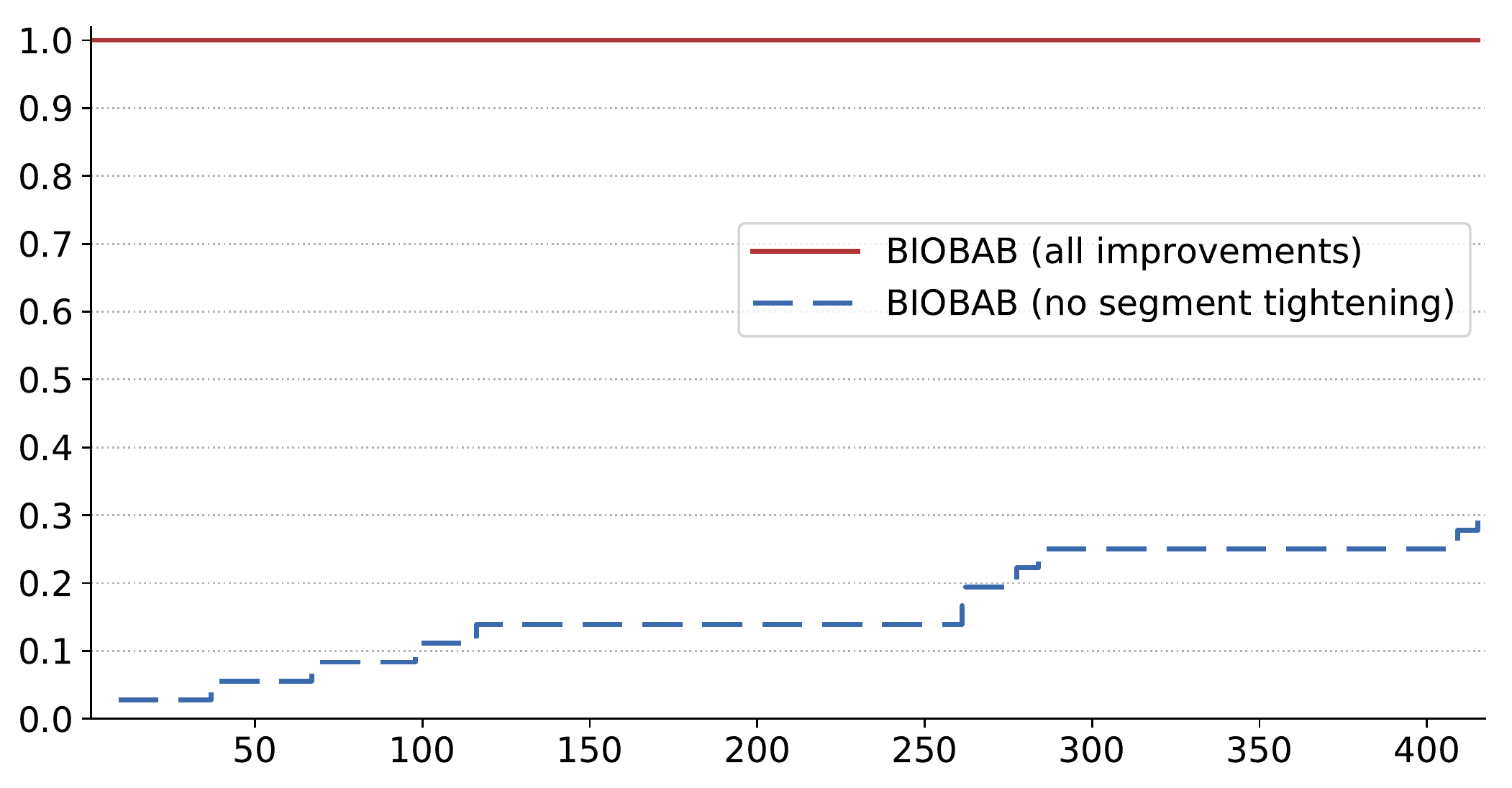}
    }
  \subfigure[SSUFLP]{
    \includegraphics[scale=.4]{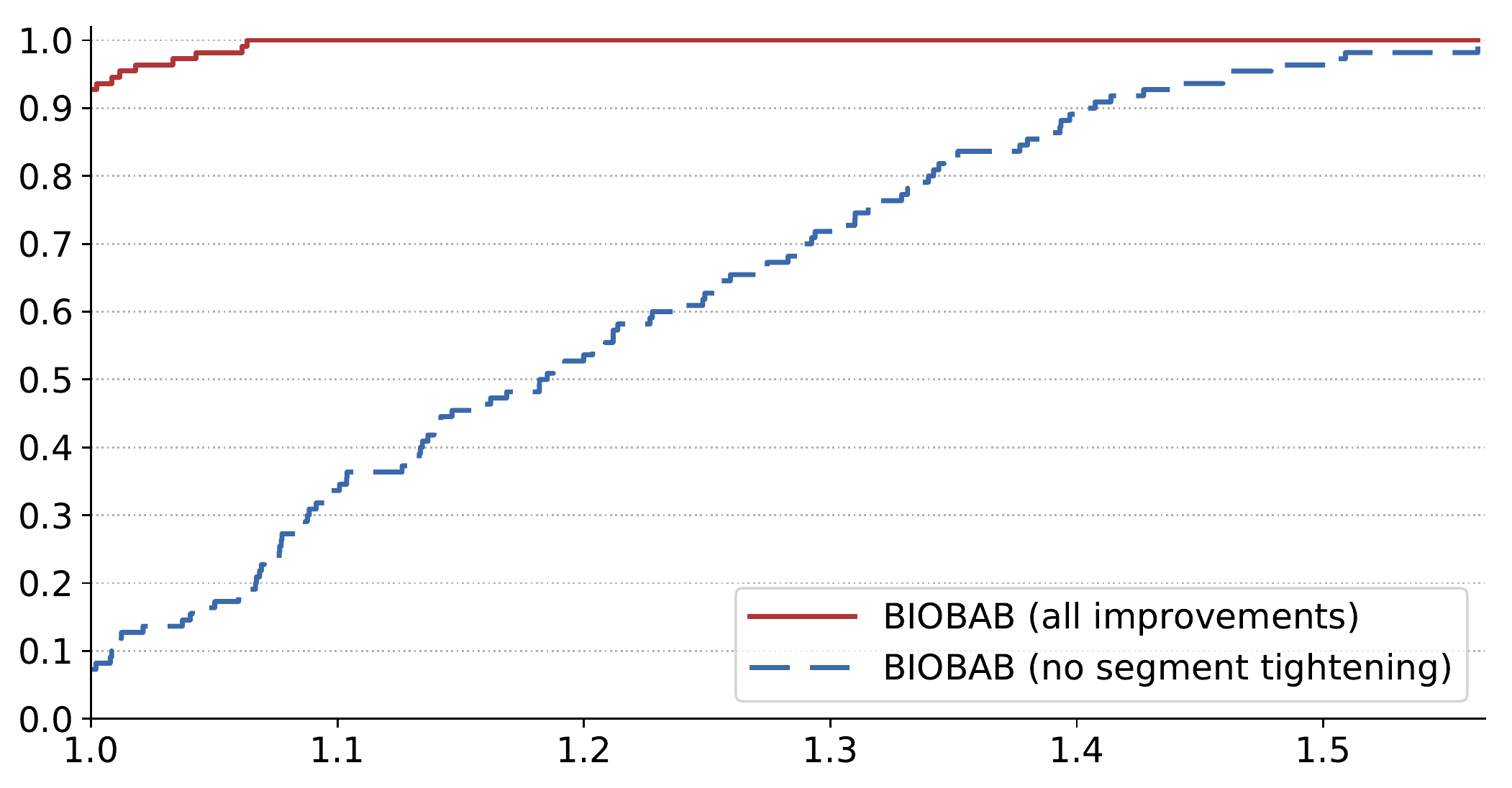}
    }
  \caption{Full BIOBAB vs no segment tightening}
  \label{fig:biobab-full-vs-nst}
\end{figure}
Approximately 70\% of UBOFLP instances cannot be solved within
two hours when segment tightening is disabled, which makes it a crucial
component. All SSUFLP Instances can be solved, but the profile of full BIOBAB
is still consistently better than the profile without segment tightening. Full
BIOBAB is up to 56\% faster.

We now also look at the performance profile of full BIOBAB, versus full BIOBAB
minus lower bound lifting, depicted in Figure~\ref{fig:biobab-full-vs-nll}. 
\begin{figure}
  \subfigure[UBOFLP]{
    \includegraphics[scale=.4]{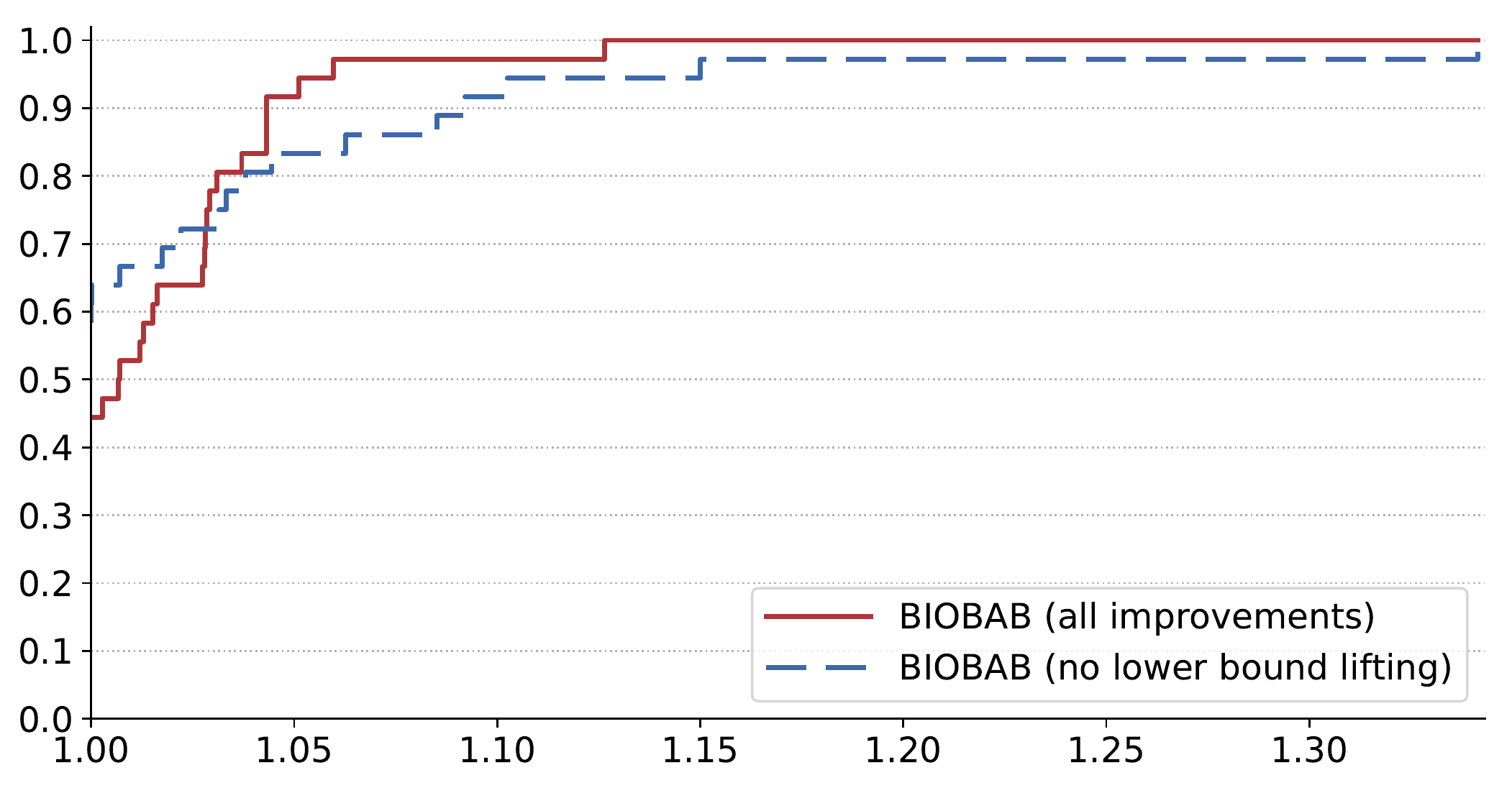}
  }
  \subfigure[SSUFLP]{
    \includegraphics[scale=.4]{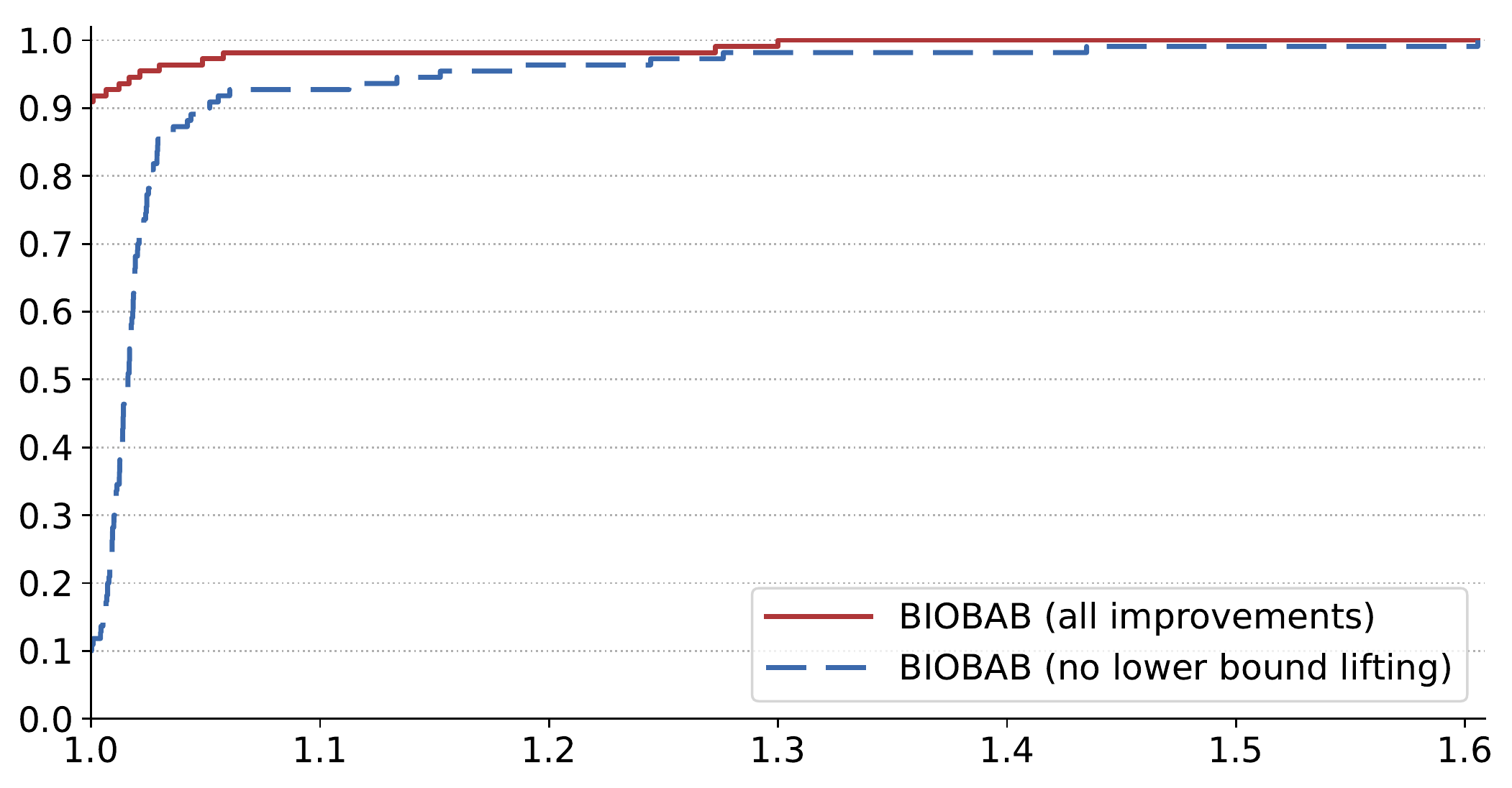}
  }
  \caption{Full BIOBAB vs no lower bound lifting}
  \label{fig:biobab-full-vs-nll}
\end{figure}
The impact of lower bound lifting is less spectacular, but still
significant. In rare cases not using lower bound lifting actually improves
performance, but only very slightly (less than 5\%, on some UBOFLP
instances). Lower bound lifting speeds up the algorithm by up to 34\% on UBOFLP
instances and by up to 60\% on SSUFLP instances.
As mentioned earlier, lower bound lifting only involves constant time
operations, so the overhead is minimal and will never cause an important
decrease in performance, even when it is entirely useless. Therefore it is safe
to use it in general.

Full BIOBAB is compared to the variant without OSB in
Figure~\ref{fig:biobab-full-vs-nosb}.
\begin{figure}
  \subfigure[UBOFLP]{
    \includegraphics[scale=.4]{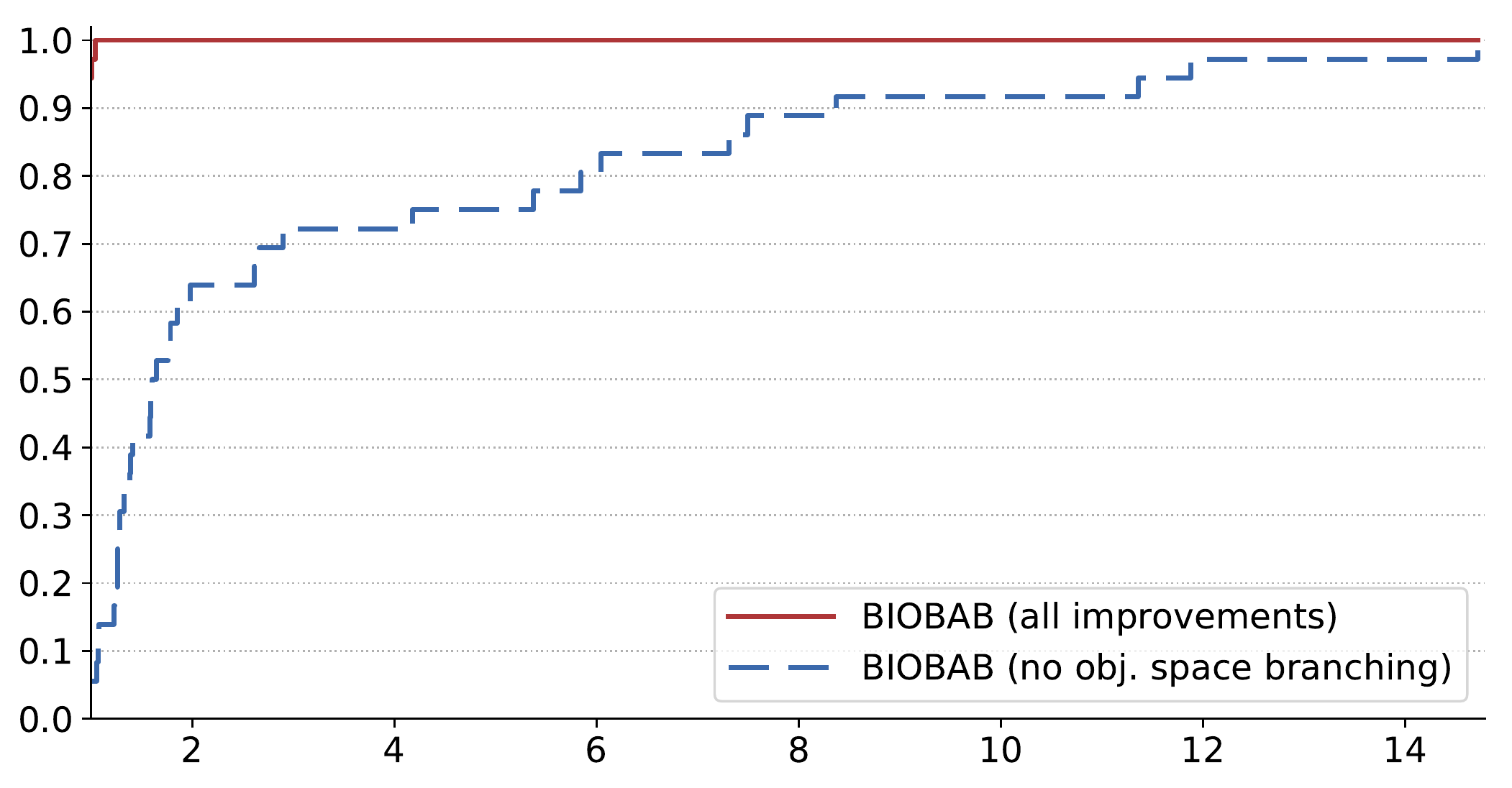}
  }
  \subfigure[SSUFLP]{
    \includegraphics[scale=.4]{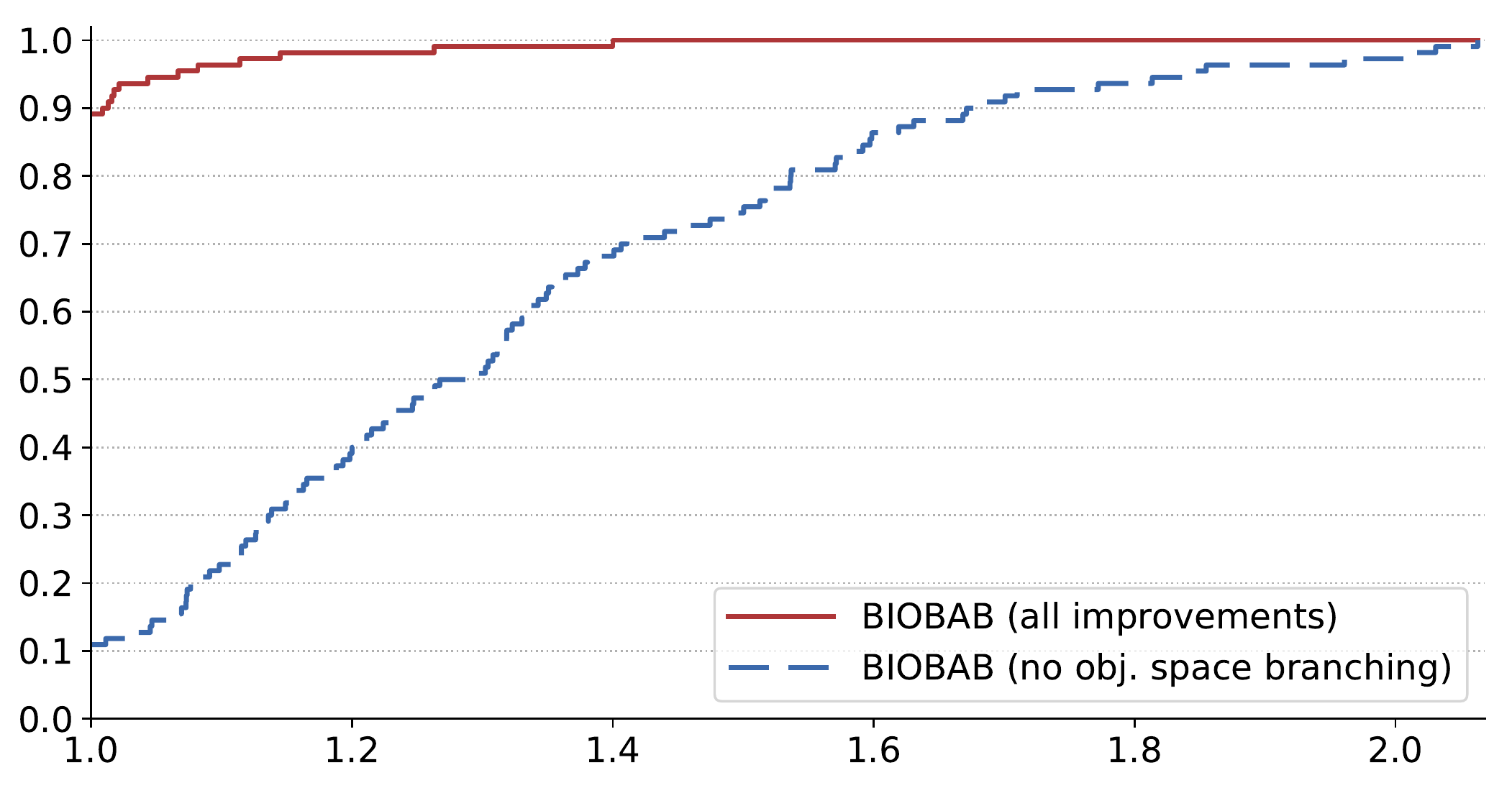}
  }
  \caption{Full BIOBAB vs no objective space branching}
  \label{fig:biobab-full-vs-nosb}
\end{figure}
The impact of objective space branching is important: using it
speeds up BIOBAB by a factor up to 14.7 for UBOFLP instances and up to 2.06 for
SSUFLP instances.
We note here that it is not surprising that objective space branching works
well when combined with segment tightening, since segment tightening creates
discontinuities in LB sets, thus allowing objective space branching to be
applied.

These performance profiles clearly show that every algorithmic improvement
brings added value, with rare exceptions for lower bound lifting.
In order to provide additional insight on the use of objective space branching,
we consider additional information related to full BIOBAB and full BIOBAB minus
OSB: we analyze CPU time, number of nodes explored during tree search and
number of linear programs solved during the whole search.
\begin{table}\scriptsize
\begin{center}
\begin{tabular}{lccccccc}
  \hline & \multicolumn{3}{c}{Full BIOBAB} && \multicolumn{3}{c}{No obj. space branching} \\
  \cline{2-4} \cline{6-8}
  Instance & \# LPs & \# nodes & CPU (s)& & \# LPs & \# nodes & CPU (s) \\
\hline
021 	&	23 	&	1 	&	0.30 	&&	23 	&	1 	&	0.28 \\
044 	&	59 	&	3 	&	0.57 	&&	59 	&	3 	&	0.61 \\
055 	&	73 	&	3 	&	0.89 	&&	73 	&	3 	&	0.85 \\
072 	&	95 	&	3 	&	1.59 	&&	95 	&	3 	&	1.56 \\
090 	&	135 	&	7 	&	3.04 	&&	197 	&	7 	&	4.18 \\
106 	&	157 	&	7 	&	5.07 	&&	235 	&	7 	&	6.33 \\
120 	&	173 	&	7 	&	6.44 	&&	267 	&	7 	&	8.88 \\
132 	&	193 	&	9 	&	8.30 	&&	335 	&	9 	&	13.52 \\
134 	&	197 	&	9 	&	8.58 	&&	339 	&	9 	&	15.16 \\
135 	&	199 	&	9 	&	8.41 	&&	333 	&	7 	&	14.72 \\
162 	&	263 	&	11 	&	16.78 	&&	401 	&	9 	&	26.39 \\
182 	&	466 	&	17 	&	32.53 	&&	699 	&	13 	&	51.58 \\
203 	&	478 	&	17 	&	41.57 	&&	697 	&	11 	&	66.53 \\
226 	&	255 	&	9 	&	31.41 	&&	319 	&	9 	&	44.19 \\
254 	&	259 	&	9 	&	41.72 	&&	335 	&	9 	&	53.46 \\
264 	&	269 	&	9 	&	47.21 	&&	345 	&	9 	&	58.12 \\
266 	&	271 	&	9 	&	49.48 	&&	347 	&	9 	&	63.44 \\
275 	&	277 	&	9 	&	53.35 	&&	303 	&	7 	&	56.18 \\
294 	&	285 	&	9 	&	64.19 	&&	361 	&	9 	&	84.88 \\
295 	&	277 	&	7 	&	71.97 	&&	363 	&	9 	&	88.12 \\
296 	&	289 	&	9 	&	74.64 	&&	365 	&	9 	&	94.16 \\
326 	&	361 	&	15 	&	112.16 	&&	1459 	&	29 	&	467.91 \\
342 	&	399 	&	19 	&	137.47 	&&	1125 	&	15 	&	397.68 \\
355 	&	473 	&	31 	&	165.05 	&&	3509 	&	61 	&	1379.35 \\
370 	&	455 	&	25 	&	195.29 	&&	983 	&	19 	&	386.76 \\
388 	&	461 	&	21 	&	204.75 	&&	3159 	&	39 	&	1494.41 \\
399 	&	531 	&	35 	&	236.88 	&&	1449 	&	33 	&	630.14 \\
410 	&	595 	&	45 	&	297.69 	&&	1061 	&	17 	&	551.82 \\
428 	&	645 	&	51 	&	342.32 	&&	7453 	&	89 	&	3885.49 \\
436 	&	609 	&	43 	&	351.86 	&&	1585 	&	31 	&	919.27 \\
449 	&	679 	&	53 	&	400.70 	&&	10586 	&	183 	&	5891.94 \\
458 	&	713 	&	57 	&	432.92 	&&	5387 	&	85 	&	3241.36 \\
472 	&	719 	&	59 	&	462.29 	&&	4681 	&	113 	&	2790.82 \\
482 	&	821 	&	75 	&	591.75 	&&	5511 	&	81 	&	3456.99 \\
499 	&	773 	&	67 	&	542.06 	&&	4209 	&	97 	&	2912.71 \\
500 	&	759 	&	65 	&	583.60 	&&	9003 	&	135 	&	6927.28 \\
\hline
\end{tabular}
\end{center}
\caption{Effect of objective space branching on tree search (UBOFLP instances).}
\label{tab:effect-of-osb}
\end{table}
Table~\ref{tab:effect-of-osb} shows the per-instance difference between the two
versions of BIOBAB, on UBOFLP instances. We note that the total number of nodes
explored does not vary that much
between both methods, but the number of LPs solved does: the version without
objective space branching solves more than ten times more LPs in some cases. We
believe this validates the idea that branching on objective space avoids
visiting several times regions that have already been established to be
dominated by the UB set; apparently visiting these regions requires solving
many LPs.

We now look closer at the number of branches generated by OSB. For that
purpose we count, for each instance solved, how many branches are generated by
OSB. We differentiate the number of branches generated at the root node and in
the remainder of the branch-and-bound tree. We report average values for these
two indicators, as well as the maximum tree depth at which OSB 
happened, in Table~\ref{tab:osb-data}. Instances that
are entirely solved without branching are disregarded.
\begin{table}
  \centering
  \bgroup
  \setlength\tabcolsep{20pt}
  \begin{tabular}{cccc}
    \hline
    Problem  & \makecell{\# root}
    & \makecell{\# other}
    & \makecell{max. depth} \\
    \hline
    SSUFLP & 23.84 & 2.07 & 12.24 \\
    UBOFLP & 5.37 & 0 & 0 \\
    \hline
  \end{tabular}
  \egroup
  \caption[caption]{Additional information on objective-space branching:
    average values.
    \\
    \# root: number of OSB branches at the root node\\
    \# other: number of OSB branches at other nodes\\
    max. depth: maximum depth at which OSB occurred\\
  }
  \label{tab:osb-data}
\end{table}
For both problem classes, OSB generates more branches at the root node, thus
indicating that integer solutions are already found while solving the root
node. For UBOFLP, OSB actually only happens at the root node; for SSUFLP, the
average number of branches at nodes other than the root node is very close to
2, which suggests that it typically happens when one new efficient point is
found. We also observe that a lower number of
branches at a given node does not necessarily indicate
that fewer integer solutions are found at this node, since segment tightening
can also eliminate the objective space between two integer points, thus
reducing the number of branches. But if there are $k$ OSB branches following
the bounding procedure at a given node, then we know that at least $k-1$
integer solutions were found during this bounding procedure.

\subsection{Comparison with criterion space search algorithms}
We implemented a
generic balanced box method as described in~\cite{Boland:2015} (it is sketched
in Algorithm~\ref{alg:box} in the Appendix), including solution harvesting and
other enhancements, within the same Python code base as BIOBAB. The frequency
with which solution harvesting is called is controlled using parameter
$\beta$. Like in the original paper, we set $\beta = 0.15$ following
preliminary experiments. Also like in the original paper, we find that in order
to produce lexicographic minima, solving
explicitly two successive IPs is more efficient than solving one weighted-sum
IP. Previously found integer solutions that are valid for the problem at hand
(i.e. within the same rectangle) are used to provide a cutoff value to the MIP
solver.

Within the same code base we also develop generic versions of the
$\epsilon$-constraint framework (see Algorithm~\ref{alg:epsi} in the Appendix)
as well as the bi-directional
$\epsilon$-constraint framework described in~\cite{Boland:2015}. We note here
that there are always two ways to run the single-directional
$\epsilon$-constraint algorithm, either starting with objective 1 or with
objective 2. For a given problem class, one of these two ways is typically
faster than the other, sometimes much faster. In these cases, it is likely that
the bi-directional version will not be as fast since it has to solve IPs in
the slower direction for 50\% of these IPs.
All $\epsilon$-constraint methods
benefit from the applicable enhancements developed for the balanced box method.

We first compare BIOBAB and other criterion space search methods on the
UBOFLP. We note here that BIOBAB solves the linear relaxation of the UBOFLP
to compute LB sets, whereas criterion space search methods call the MIP solver
directly.
Performance profiles are given in Figure~\ref{fig:uboflp-selected}.
\begin{figure}
  \centering
  \includegraphics[scale=.4]{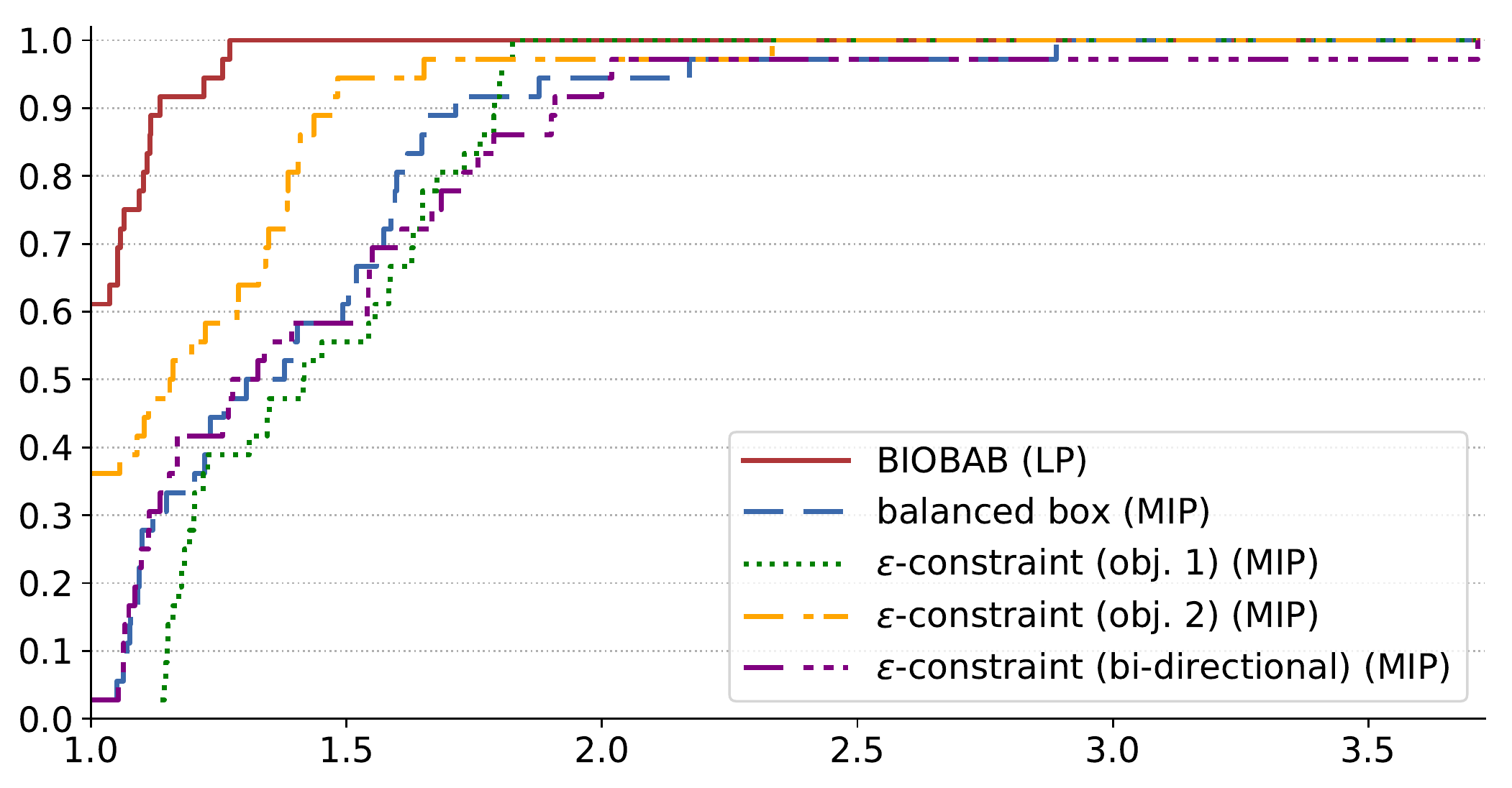}
  \caption{UBOFLP: BIOBAB vs criterion space search methods}
  \label{fig:uboflp-selected}
\end{figure}
BIOBAB appears to perform better than criterion space search methods on the
UBOFLP. It is however not overwhelming, as all methods always converge within a
factor 3 of the best performance overall, i.e. within the same order of
magnitude. Nonetheless, BIOBAB is definitely competitive on these instances. We
also note that single-directional $\epsilon$-constraint performs better than
its bi-directional counterpart.

We now look at the performance profiles of the same methods, this time on
SSUFLP instances. These profiles are represented in
Figure~\ref{fig:ssuflp-selected}.
\begin{figure}
  \centering
  \includegraphics[scale=.4]{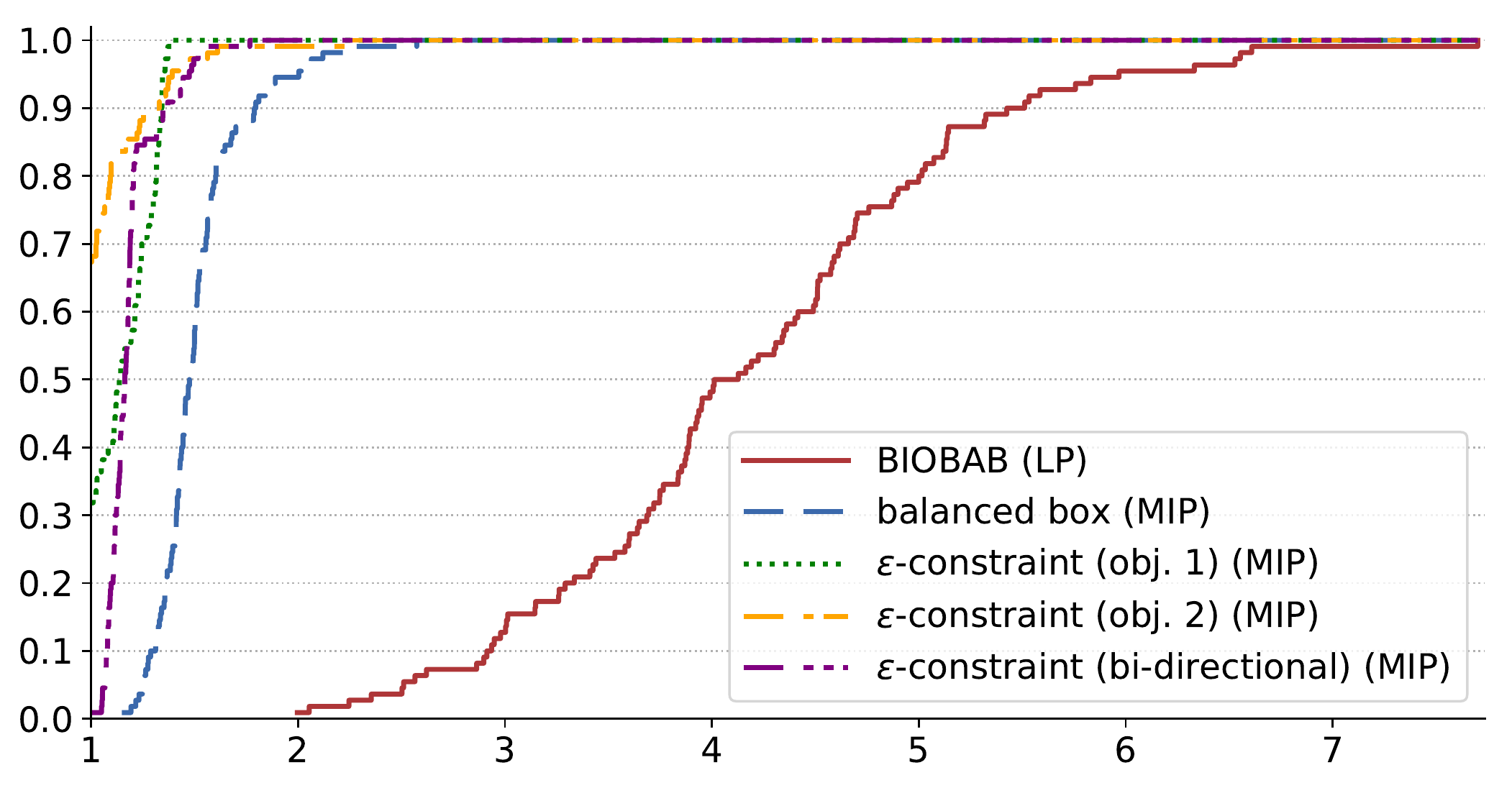}
  \caption{SSUFLP: BIOBAB vs criterion space search methods}
  \label{fig:ssuflp-selected}
\end{figure}
The picture is now different: BIOBAB is the worst of all compared methods. It
is however, still in the same order of magnitude (7 times slower than the best
method in the worst case). In this comparison, all criterion
space search methods benefit from the power of commercial MIP solvers, which
are especially efficient at tree exploration and branching on decision
variables. Even though we show above that objective space branching
significantly improves the performance of BIOBAB, branching on decision
variables is still performed in a somewhat naive way (fractional
variable with value closest to 1, breadth-first tree exploration). In the next
experiment we do not relax integrality constraints when computing LB sets for
the SSUFLP: IPs are solved within the bounding procedure of BIOBAB. An
expectation is that LB sets will be of better quality, at the cost of extra CPU
time. Better LB sets should reduce the size of the branch-and-bound tree of
BIOBAB, thus delegating part of the task of tree search to the commercial MIP
solver. Another expectation is that more integer solutions will be obtained
earlier in the search, which in turn will allow OSB and integer dominance to
speed up the search.
We now compare this version of BIOBAB to the same criterion space
search methods, in Figure~\ref{fig:ssuflp-MIP}.
\begin{figure}
  \centering
  \includegraphics[scale=.4]{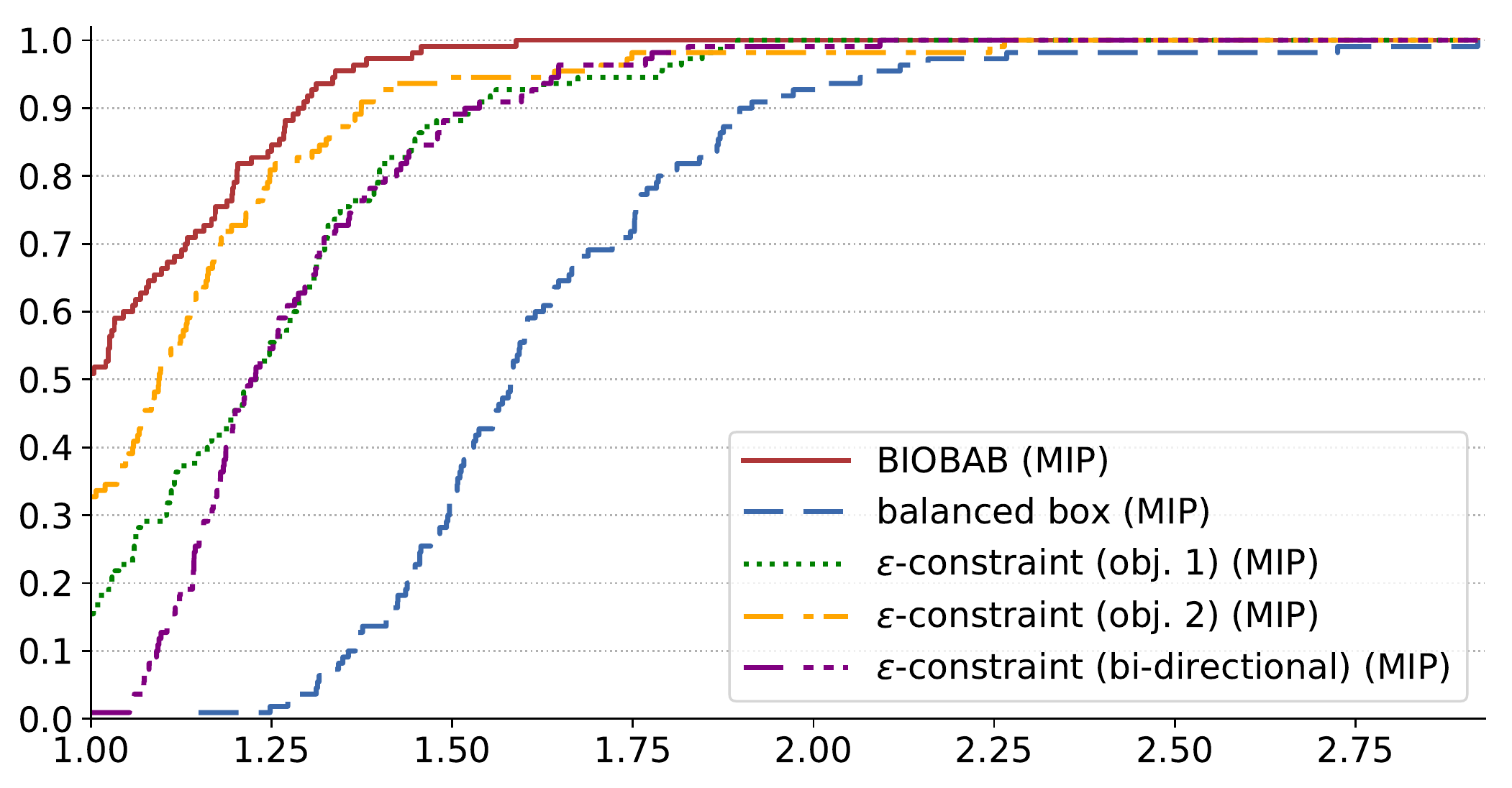}
  \caption{SSUFLP: BIOBAB (MIP) vs criterion space search methods}
  \label{fig:ssuflp-MIP}
\end{figure}
The performance profiles clearly show that using an IP instead of its linear
relaxation can be beneficial: BIOBAB is now performing much better,
actually slightly better than all criterion space search methods.

Additionally, we display in Figure~\ref{fig:ssuflp-nLPs} the total number of
linear programs (LPs) solved by each algorithm to generate the whole set of
non-dominated points. It is clear that the poor performance of BIOBAB using LPs
is correlated with the unusually high amount of LPs it requires to solve, while
BIOBAB using MIPs requires to solve a number of LPs comparable with that of
other algorithms. This strengthens the idea that efficient branching on
decision variables is the issue here.
\begin{figure}
  \centering
  \includegraphics[scale=.87]{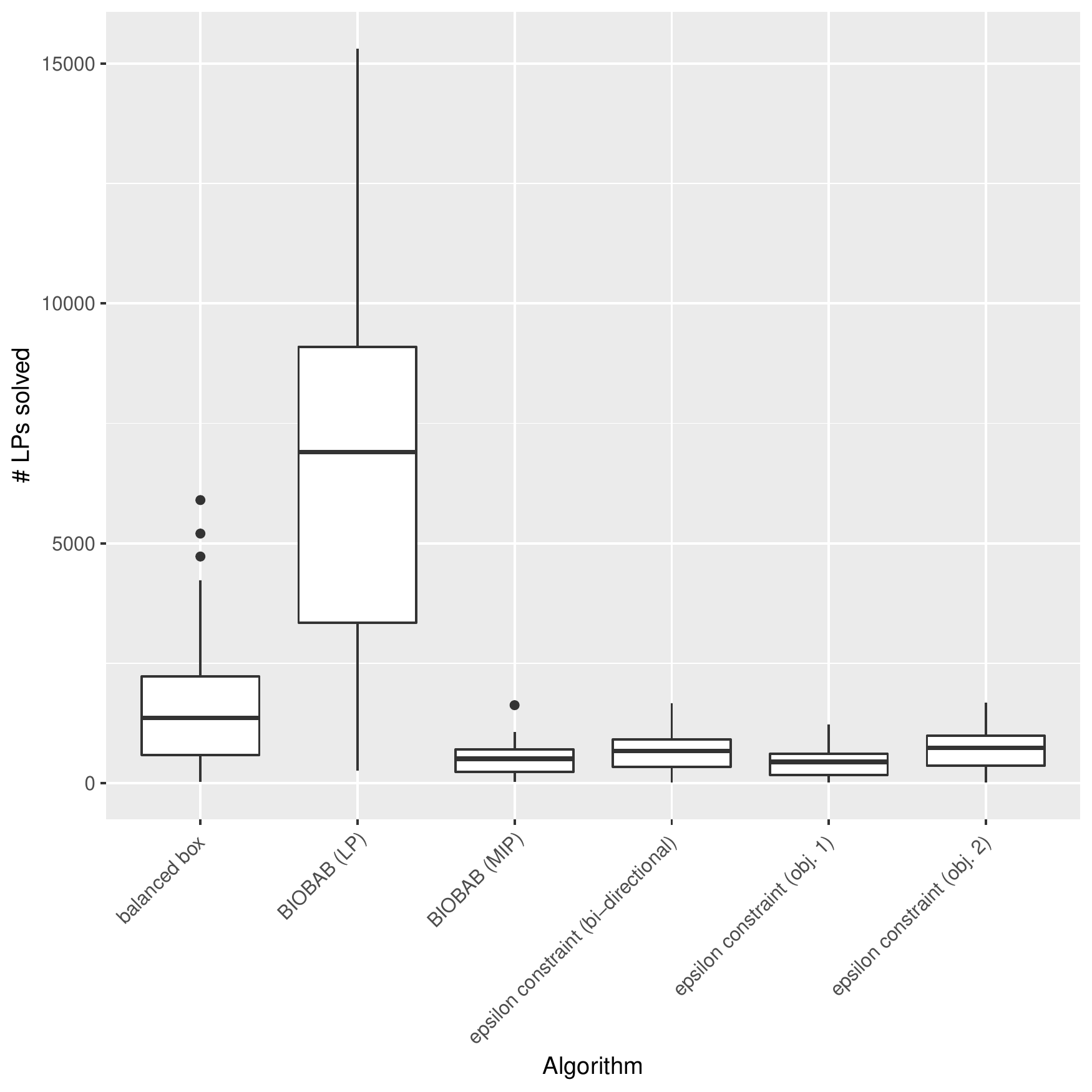}
  \caption{SSUFLP: number of LPs solved by each algorithm.}
  \label{fig:ssuflp-nLPs}
\end{figure}

\subsection{Comparison with bi-objective branch-and-bound}
We now compare BIOBAB with the branch-and-bound algorithm for bi-objective
mixed-integer programs from~\cite{Belotti:2013}. Their algorithm is for
mixed-integer programs, but they also provide results on the bi-objective set
covering problem, which only uses binary variables. We run BIOBAB on the same
instances and can therefore compare the CPU time required for solving these
instances. We present results using the same format as in their article. Their
computer is a 3.2 GHz workstation with 4 GB of RAM. 
We observe that BIOBAB is always faster, 
more than 10 times faster for the largest instances. Even if it is not clear
how much has to attributed to differences in CPU, we can safely conclude that
BIOBAB can compete with a state-of-the-art branch-and-bound method on these
instances.

\begin{table}
 \centering
  \bgroup
  \begin{tabular}{llll}
    \hline
    Instance size ($m \times n$) & \# instances & CPU (s) Belotti et al.
    & CPU (s) BIOBAB \\
    \hline
    5 $\times$ 25 & 1 & 0.1 & 0.01 \\
    10 $\times$ 50 & 1 & 2.1 & 0.73 \\
    10 $\times$ 100 & 4 & 49.3 & 3.76 \\
    \hline
  \end{tabular}
\egroup
  \caption[caption]{Comparison of BIOBAB and bi-objective branch and bound
    from~\cite{Belotti:2013} on bi-objective set covering problem instances:
    average CPU time. Instances have $m$ variables and $n$ constraints.
    \\
    Belotti et al.: 3.2 GHz workstation with 4 GB of RAM.\\
    BIOBAB: 2.6 GHz Xeon CPU with a 4 GB RAM limit.
  }
\end{table}

\subsection{Application to cases which cannot benefit from the
 power of MIP solvers}
\label{sec:bitoptw}
A major strength of state-of-the-art MIP solvers lies in their ability to
conduct efficient branch-and-bound or branch-and-cut tree search, through
involved tree exploration heuristics that are industrial secrets. However, 
it is not always possible to benefit from
these advanced methods.
Sometimes even the single objective problem version is too hard to
solve with a MIP solver. We want to assess how BIOBAB fares in such
cases. 
For that purpose, we use the BITOPTW
as test case.
The BITOPTW is formally described in Appendix~\ref{sec:bitoptw-compact}. Such a
model is challenging for current state-of-the-art MIP solvers.

\subsubsection{Lower bound set: column generation}

Since state-of-the-art exact methods for single-objective routing problems
mostly rely on column generation based techniques \citep[see,
e.g.][]{Baldacci:2012survey}, we also generate lower bound sets for the BITOPTW
by means of column generation.

Let $P_r$ denote the total score or profit achieved by route $r$, $C_r$ the
total travel cost of route $r$, and let $a_{ir}$ indicate whether location $i$
is
visited by route $r$ ($a_{ir} = 1$) or not ($a_{ir} = 0$). Using binary
variables $x_{r}$ equal to $1$ if route $r$ is selected from the set $\Omega$
of all feasible routes, the BITOPTW can also be
formulated as a path-based model: 
\begin{align}
	f_1(x) = \min \: \sum_{r \in \Omega} C_r x_r \label{btopP:OFcost}\\
	f_2(x) = \max \: \sum_{r \in \Omega} P_r x_r \label{btopP:OFprofit}
\end{align}
\begin{align}
	\sum_{r \in \Omega} a_{ir} x_r & \leq 1
	&& \forall i \in N  
	\label{btopP:covering} \\
	\sum_{r \in \Omega} x_r & = m
	 \label{btopP:fleet}\\
	x_r & \in \{0,1\} 
	&& \forall r \in \Omega  
	\label{btopP:binary}.
\end{align}
Relaxing integrality requirements on the $x_r$ variables, we replace
constraints \eqref{btopP:binary} with:
\begin{align}
	x_r & \geq 0 && \forall r \in \Omega \label{btopP:binaryrealx}
\end{align}
We also combine the two objective functions into a weighted sum ($w_1$ and $w_2$
giving the respective non-negative weights determined by our LB set computation procedure $bound$):
\begin{align}
	\min \: \sum_{r \in \Omega} ( w_1 C_r - w_2 P_{r} ) x_r
        \label{eq:weighted-objective}
\end{align}
We thus obtain a single objective linear problem that can be solved by means of
column generation, which allows us to compute LB sets
using Algorithm~\ref{alg:aneja}.
In column generation \citep[see, e.g.][]{desrosiers2005primer}, in each iteration a subset of promising columns is
generated and appended to the restricted set of columns
$\Omega^{\prime}$ and the single objective linear program is re-solved on
$\Omega^{\prime}$. Column generation continues as long as new promising columns
exist. Otherwise, an optimal solution has been found (i.e. in that case, an
optimal solution of the single objective linear program on $\Omega^{\prime}$ is
also an optimal solution of the single objective linear program on $\Omega$).
Promising columns are identified using dual information. 
Let $\pi_i$ denote the dual variable associated with
constraint~\eqref{btopP:covering} for a given $i$ and $\alpha$ the dual
variable associated with constraint~\eqref{btopP:fleet}, the pricing subproblem
we have to solve corresponds to:
\begin{align}
\min \: w_1 C_r - w_2 P_r - \sum_{i \in N} a_{ir} \pi_i - \alpha 
\end{align}
subject to constraints~\eqref{bitoptw:start}--\eqref{bitoptw:zbin} given in Appendix~\ref{sec:bitoptw-compact}, omitting
subscript $k$. It is an elementary shortest path problem with resource
constraints that can be solved by
means of a labeling algorithm \citep[cf.][]{feillet:2004exact}. 
In our labeling algorithm, a label carries the following information: the node the label is associated with, the time consumption until that node, the reduced cost so far, which nodes have been visited along the path leading to the node, and a pointer to the parent label. In order to compute the reduced cost of the path associated with a given label, we use a reduced cost matrix that is generated before the labeling algorithm is called. The reduced cost of arc $(i,j)$ is $\bar{c}_{ij} = w_1 c_{ij} - w_2 S_{i} - \pi_i $, where if $i = 0$, $\pi_i$ is replaced by $\alpha$ and $c_{ij}$ gives the costs for traversing arc $(i,j)$ and $S_i$ the score for visiting location $i$.
Since
the aim of this paper is not to investigate the most efficient pricing
algorithm, we refrain from adopting all enhancements proposed in the literature
\citep[e.g.][]{righini:2009decremental}. 

Note that during the execution of BIOBAB we keep all previously generated
columns in the column pool. We only temporarily
deactivate those columns that are incompatible with current branching
decisions. When branching on objective space, it can happen that the current
pool of columns does not allow
to produce a feasible solution. This can be due to two reasons: either (i)
because there exists no feasible solution to the current problem or (ii)
because there exists a feasible solution but it cannot be reached with the
currently available columns. In order to fix this issue and guarantee
feasibility, we use a dummy column which allows to satisfy every constraint
from the current problem, including the branching decisions described in
Section~\ref{sec:bap-branching}. These include branching on control
points. Therefore, some control points may be mandatory and others may be
forbidden. Thus, the dummy column covers all mandatory control points, does not
cover any forbidden control point, uses all the available 
vehicles, has a cost inferior to the maximum allowed cost and a profit superior
to the minimum allowed profit. Using this column is penalized in the objective
function, so that the column generation procedure converges to feasible
solutions that do not use the dummy column. Assuming the variable associated to
this dummy column is $x_D$, the objective function described in
Equation~\eqref{eq:weighted-objective} is modified as follows:

\begin{align}
	\min \: \sum_{r \in \Omega} ( w_1 C_r - w_2 P_{r} ) x_r + M x_D,
        \label{eq:penalized-objective}
\end{align}
where $M$ is an arbitrarily large number. The dummy column is only activated
when no feasible solution can be found, and it is systematically deactivated
after column generation converges. At this point, if $x_D$ has a strictly
positive value then there does not exist a feasible solution that satisfies all
the branching decisions.

\subsubsection{Branching scheme}\label{sec:bap-branching}
Objective space branching is performed as usual (see
Section~\ref{sec:branching}).
Since objective space branching involves setting bounds on both objectives, we
include these constraints right from the start (initially setting the respective
bounds to infinity) and later update them according to the branching
decisions:
\begin{align}
	\sum_{r \in \Omega} C_r x_r & \leq c_1, \label{con:obj1}\\
	\sum_{r \in \Omega} - P_r x_r & \leq c_2, \label{con:obj2} 
\end{align}
where $c= (c_1, c_2)$ is the local nadir point. 
  In order to properly incorporate dual information from these two
constraints into the subproblem, we modify the reduced cost matrix. 
Let $\lambda_1$ be the dual variable value associated with constraint \eqref{con:obj1} and $\lambda_2$ the value associated with constraint \eqref{con:obj2}, then the reduced cost of a given arc $(i,j)$ is given by $\bar{c}’_{ij} = (w_1 - \lambda_1) c_{ij} - (w_2 + \lambda_2) S_i - \pi_i$.
Regarding decision-space branching, we either branch on control points or on
arcs (Line~\ref{line:psb} in Algorithm~\ref{alg:branch}). First, we check if a
control point is visited a fractional number of times. Since each lower bound
represents a set of solutions, the number of times a control point is visited
may take different values within the same LB set. In order to select a control
point to branch on, we consider the supported (fractional) solutions defining
the current LB set. We then 
average the number of visits for each control point over this set of
solutions. The control point with the average number of visits closest to $0.5$ is
then selected for branching. In the case where each control point is visited 0
or 1 time on average, we check for arcs that are traversed a fractional number
of times on average,  following the same procedure. 
Branching on control points can be achieved by modifying constraint~\eqref{btopP:covering} in the master problem. In order to force the visit of control point $i$, this constraint becomes $\sum\limits_{r \in \Omega} a_{ir}x_r \geq 1$. In order to forbid the visit of $i$, this constraint becomes $\sum\limits_{r \in \Omega} a_{ir}x_r \leq 0$.
As is usual in
branch-and-price for routing, branching on arcs involves adding constraints to the subproblem.

\subsubsection{Applying the $\epsilon$-constraint method}

In order to asses the quality of our BIOBAB on the BITOPTW, we compare it to
the $\epsilon$-constraint method, using the faster direction. In this case, the
function $solveMIP$ (Line~\ref{line:weightedsum} of Algorithm~\ref{alg:epsi} in
the Appendix) uses the same tree search algorithm as BIOBAB, i.e.~the same
rules for branching are applied, with the exception of objective space
branching, since a single-objective problem is solved.
The bounding procedure is similar to the one used in BIOBAB, i.e. the same
column generation code is used. However, instead of solving a succession of LPs
to compute a LB set,
a single solution is produced in
each call to the procedure. Other than that, the code is exactly the same for
both methods.
This means that in the $\epsilon$-constraint framework, previously generated
columns are reused. Additionally, a single-objective solution to the linear
relaxation yields a LB segment
; therefore, the
existing UB set can be used to fathom subtrees using the filtering procedure
described in Section~\ref{sec:filter}. This works in combination with the fact
that all produced integer solutions are stored, thus enhancing the
$\epsilon$-constraint implementation with improvements developed for
BIOBAB.

\subsubsection{BITOPTW Benchmark instances}

We use the TOPTW instances
by~\cite{righini:2009decremental} and we reduce their size by considering
the first 15, 20, 25, 30 and 35 customers only. Using instances c101\_100,
r101\_100, rc101\_100 and pr01 and the number of customers mentioned above, we
generate test instances for 1, 2, 3 and 4 vehicles. In total there are 80
BITOPTW instances.

\subsubsection{Computational results}
We now compare BIOBAB with the $\epsilon$-constraint method.
Column generation can be time-intensive, so in this case all procedures are
implemented in C++. The same column generation code is used by both BIOBAB and
$\epsilon$-constraint. We run both algorithms on the same computer as mentioned
above, and produce the performance profile depicted in Figure~\ref{fig:cg}.
\begin{figure}
  \centering
  \includegraphics[scale=.4]{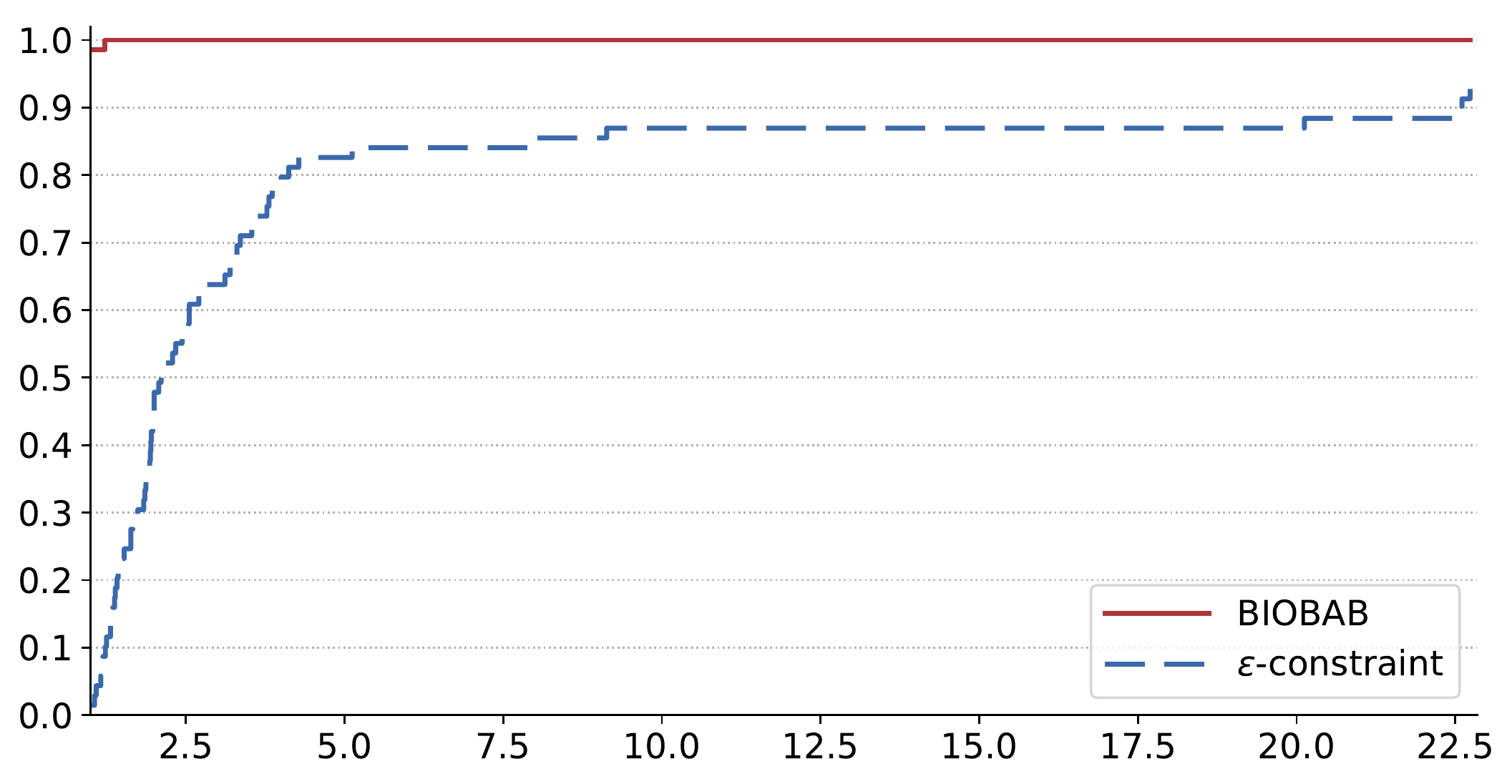}
  \caption{BIOBAB vs $\epsilon$-constraint when using column generation to
    solve the underlying linear program.}
  \label{fig:cg}
\end{figure}
As we can see, BIOBAB offers a better performance overall. Although the
$\epsilon$-constraint method is within the same order of magnitude for almost
90\% of the instances, which is good, it can get up to 20 times slower in some
rare cases. More strikingly, there are instances solved by BIOBAB that
cannot be solved within the allotted CPU budget by the $\epsilon$-constraint
method.

\section{Conclusion}\label{sec:conclusion}
We have introduced a new algorithm for bi-objective integer optimization. 
The algorithm is a generalization of
single-objective branch-and-bound
to the bi-objective
context. 
Our algorithm is generic and
uses a branching rule  exploiting the bi-objective nature of the underlying optimization problem.
We have also
developed several enhancements based on the integrality of the objective values of
efficient solutions. Experiments show that these enhancements significantly
improve the performance of BIOBAB, and make it competitive with
state-of-the-art generic methods for bi-objective optimization for integer
programs.
In order to demonstrate the versatility of our framework, we have also applied
it to a routing problem where the lower bound linear program is solved by means
of column generation, thus providing, to the best of our knowledge, the first
bi-objective branch-and-price algorithm. Standard criterion space search
methods rely on the efficiency of MIP solvers and show their limitations in
such cases:
the proposed branch-and-bound algorithm
outperforms the $\epsilon$-constraint method, which is a
standard benchmark for multi-objective optimization. 

One research
perspective lies in generalizing the concept of lower bound set to more than
two objectives, thus allowing exact approaches relying on this notion for more
than two objectives. However this will raise the issue of the cardinality of the
efficient set, which typically increases exponentially with the number of
objectives. Therefore another research perspective lies in deriving
approximation methods for multi-objective optimization based on exact
procedures, with the goal of sacrificing neither the quality of solutions nor
the readability of the produced set of solutions.

\section*{Acknowledgments}
We wish two thank four anonymous referees for their valuable
comments. Furthermore, financial support from the Austrian Science Fund (FWF):
P23589-N13 is gratefully acknowledged.

\appendix

\label{appendix}

\section{Algorithms for criterion space search methods}

Algorithms~\ref{alg:epsi} and~\ref{alg:box} outline the $\epsilon$-constraint method and the balanced box method, respectively. The set $P$ denotes the set of Pareto optimal solutions.
 Function $solveMIP$ can either be a call to Gurobi or CPLEX or to a
 tailor-made branch-and-bound algorithm. The value of $\epsilon$ is supposed to
 be valid for the problem at hand; for instance with pure integer problems with
 integer coefficients for the objective function, $\epsilon= 1$ is such a valid
 value.
\begin{algorithm}
  \caption{$\epsilon$-constraint algorithm}
  \begin{algorithmic}
    \STATE $P \leftarrow \emptyset$
    \STATE $\epsilon$-$constraint$ $\leftarrow f_2 \leq \infty$
    \WHILE {MIP is feasible}
    \STATE $x \leftarrow solveMIP(lexmin(f_1, f_2), \epsilon$-$constraint)$
    	\label{line:weightedsum}
    	\STATE $P \leftarrow P \cup x$
    \STATE $\epsilon$-$constraint$ $\leftarrow f_2 \leq f_2(x) - \epsilon$
   	 \label{line:epsi-constr}
    \ENDWHILE
    \RETURN $P$
  \end{algorithmic}
  \label{alg:epsi}
\end{algorithm}

\begin{algorithm}
  \caption{Balanced box method}
  \begin{algorithmic}
    \STATE $P \leftarrow \emptyset$, $Rectangles \leftarrow \emptyset$
   \STATE $x^T \leftarrow solveMIP(lexmin(f_1, f_2))$
   \STATE $x^B \leftarrow solveMIP(lexmin(f_2,f_1))$
    \STATE $P \leftarrow P \cup \{ x^T, x^B \}$	
    \STATE $Rectangles \leftarrow Rectangles \cup \{ rectangle \: z^T=(f_1(x^T),f_2(x^T)), z^B = (f_1(x^B), f_2(x^B)) \}$
    \WHILE {$Rectangles \neq \emptyset$}
    \STATE $Rectangles.pop(rectangle(z^1, z^2))$
    \STATE $rectangle^B \leftarrow rectangle((z^1_1,(z^1_2+z^2_2)/2),z^2)$
    \STATE $\bar{x}^1 \leftarrow solveMIP( lexmin(f_1,f_2), rectangle^B constraints)$
    \STATE $\bar{z}^1 =( f_1(\bar{x}^1), f_2(\bar{x}^1))$
    \IF {$ \bar{z}^1 \neq z^2$} 
    	\STATE $P \leftarrow P \cup \{ \bar{x}^1 \}$
    	\STATE $Rectangles \leftarrow Rectangles \cup rectangle( \bar{z}^1,z^2)$
	\ENDIF 
  \STATE $rectangle^T \leftarrow rectangle(z^1, (\bar{z}^1_1-\epsilon, (z^1_2+z^2_2)/2))$ 
   \STATE $\bar{x}^2 \leftarrow solveMIP( lexmin(f_2, f_1), rectangle^T constraints)$
   \STATE $\bar{z}^2 =( f_1(\bar{x}^2), f_2(\bar{x}^2))$
   \IF {$ \bar{z}^2 \neq z^1$} 
   	\STATE $P \leftarrow P \cup \{ \bar{x}^2 \}$
    	\STATE $Rectangles \leftarrow Rectangles \cup rectangle( z^1,\bar{z}^2)$
	\ENDIF 
    \ENDWHILE
    \RETURN $P$
  \end{algorithmic}
  \label{alg:box}
\end{algorithm}

\section{UBOFLP: Model formulation}
\label{sec:uboflp-model}

In the UBOFLP, we are given a set $V$ of potential facilities and a set $N$ of locations. Furthermore, we denote by $N_i$ the set of locations that can be covered by facility $i$ because they are within a certain radius. Each facility has opening costs $F_i$ and each location has a weight or demand $W_j$. The considered objectives simultaneously minimize the opening costs of facilities and maximize the total covered demand.

Using binary decision variables $y_{i} \in \{0,1\}$ equal to $1$ if facility $i$ is opened and $0$ otherwise, and $x_{ij} \in \{0,1\}$ equal to $1$ if location $j$ is assigned to facility $i$, we formally define the UBOFLP as follows:
\begin{align}
	\min \: &\sum_{i \in V}  F_{i} y_{i}
		\label{uboflp:OFcost}\\
	\max \: &\sum_{i \in V} \sum_{j \in N_i} W_j x_{ij}
		\label{uboflp:OFcoverage}
\end{align}
subject to:
\begin{align}
	x_{ij} & \leq y_i 
	&& \forall i \in V, j \in N, 
	\\
	\sum_{i \in V} x_{ij} & = 1
	&& \forall j \in N,
	\\ 
	y_i  & \in \{0,1 \}
	&& \forall i \in V, 
	\\
	 x_{ij} & \in \{0,1 \}
	 && \forall i \in V, j \in N.
\end{align}

\section{SSUFLP: Model formulation}
\label{sec:ssuflp-model}

In the SSUFLP, we are given a set $V$ of potential facilities and a set $N$ of locations that have to be assigned to one of the facilities each. Each facility has opening costs $F_i$ and it costs $c_{ij}$ to assign location $j$ to facility $i$. The considered objectives simultaneously minimize the total opening costs of facilities and the total assignment costs.

Using binary decision variables $y_{i} \in \{0,1\}$ equal to $1$ if facility $i$ is opened and $0$ otherwise, and $x_{ij} \in \{0,1\}$ equal to $1$ if location $j$ is assigned to facility $i$, we formally define the SSUFLP as follows:
\begin{align}
	\min \: &\sum_{i \in V}  F_{i} y_{i}
		\label{ssuflp:OFfixedcost}\\
	\min \: &\sum_{i \in V} \sum_{j \in N} c_{ij} x_{ij}
		\label{ssuflp:OFassigncost}
\end{align}
subject to:
\begin{align}
	x_{ij} & \leq y_i 
	&& \forall i \in V, j \in N, 
	\\
	\sum_{i \in V} x_{ij} & = 1
	&& \forall j \in N,
	\\ 
	y_i  & \in \{0,1 \}
	&& \forall i \in V, 
	\\
	 x_{ij} & \in \{0,1 \}
	 && \forall i \in V, j \in N.
\end{align}

\section{BITOPTW: Model formulation}
\label{sec:bitoptw-compact}
The BITOPTW is defined on a directed graph $G= (V,A)$, where $A$
is the set of arcs and $V$ the set of vertices, representing the starting
location (vertex $0$), the ending location (vertex $n+1$) and $n$ control
points. Each control point $i$ is associated with a score
$S_i$, a service time $d_i$ and a time window $[e_i,l_i]$. Each route $k \in
K$, with $|K| = m$,
has to start at location $0$ and end at location $n+1$ and
each arc $(i,j)$ is associated with travel cost $c_{ij}$ and travel time
$t_{ij}$. The aim is to maximize the total collected score and to
simultaneously minimize the total travel cost.
Using binary decision variables $z_{i} \in \{0,1\}$ equal to $1$ if  location $i$ is visited and $0$ otherwise, and $y_{ijk} \in \{0,1\}$ equal to $1$ if arc $(i,j)$ is traversed by route $k$ and $0$ otherwise,  and continuous variables $B_{ik}$, denoting the beginning of service at $i$ by route $k$, we formally define the BITOPTW as follows:
\begin{align}
	\min \: &\sum_{k \in K} \sum_{(i,j) \in A} c_{ij} y_{ijk}
		\label{bitoptw:OFcost}\\
	\max \: &\sum_{i \in V \setminus \{0,n+1\}} S_i z_{i}
		\label{bitoptw:OFprofit}
\end{align}
subject to:
\begin{align}
	\sum_{j \in V \setminus \{0\}} y_{0jk} &= 1 
	&& \forall k \in K
		\label{bitoptw:start}\\
	\sum_{i \in V \setminus\{n+1\}} y_{i,n+1,k} &= 1 
	&& \forall k \in K 
		\label{bitoptw:end}\\
	 \sum_{k \in K} \sum_{j \in V \setminus\{n+1\}}  y_{jik}  &= z_{i}
	 && \forall i \in V \setminus \{0,n+1\}
	 	\label{bitoptw:visit}\\
	\sum_{j \in V \setminus\{n+1\}} y_{jik} - \sum_{j \in V \setminus \{0\}} y_{ijk} & = 0
	&& \forall k \in K, i \in V \setminus \{0,n+1\}
		\label{bitoptw:inout}\\
	(B_{ik} + d_i + t_{ij}) y_{ijk} &\leq B_{jk}
	&& \forall k \in K, (i,j) \in A
		\label{bitoptw:time} \\
	e_i \leq B_{ik} & \leq l_i
	&& \forall k \in K, i \in V
		\label{bitoptw:tw}\\
	y_{ijk} & \in \{0,1\} 
	&& \forall k \in K, (i,j) \in A
		\label{bitoptw:ybin}\\
	z_{i} & \in \{0,1\}
	&& \forall i  \in V \setminus \{0,n+1\}
		\label{bitoptw:zbin}
\end{align}
Objective function~\eqref{bitoptw:OFcost} minimizes the total
routing costs while objective function~\eqref{bitoptw:OFprofit} maximizes the
total collected profit. Constraints~\eqref{bitoptw:start} and
~\eqref{bitoptw:end} make sure that each route starts at the defined starting
point and ends at the correct ending point.
Constraints~\eqref{bitoptw:visit} link the binary decision variables and~\eqref{bitoptw:inout} ensure connectivity for visited nodes.
Constraints~\eqref{bitoptw:time} set the time variables and~\eqref{bitoptw:tw}
make sure that time windows are respected. We note that constraints~\eqref{bitoptw:time} are not linear but they can easily be linearized using big $M$ terms.

\end{document}